\documentclass{article}

\usepackage{arxiv}

\usepackage[utf8]{inputenc}
\usepackage[T1]{fontenc}
\usepackage{hyperref}
\usepackage{url}
\usepackage{booktabs}
\usepackage{amsfonts}
\usepackage{nicefrac}
\usepackage{microtype}
\usepackage{graphicx}
\usepackage{amsmath}
\usepackage{amsthm}
\usepackage{enumitem}

\graphicspath{ {./images/} }

\newtheorem{proposition}{Proposition}
\newtheorem{corollary}{Corollary}

\title{The Economics of Builder Saturation in Digital Markets}

\author{
 Armin Catovic \\
  Director Data \& AI, Funnel\\
  Stockholm, Sweden \\
  \texttt{armin.catovic@funnel.io}, \texttt{armin@divergent.icu} \\
}

\begin{document}
\maketitle

\begin{abstract}
Recent advances in generative AI systems have dramatically reduced the cost of 
digital production, fueling narratives that widespread participation in 
software creation will yield a proliferation of viable companies. This 
paper challenges that assumption. We introduce the \emph{Builder Saturation 
Effect}, formalizing a model in which production scales elastically but 
human attention remains finite. In markets with near-zero marginal costs 
and free entry, increases in the number of producers dilute average 
attention and returns per producer, even as total output expands. Extending 
the framework to incorporate quality heterogeneity and reinforcement 
dynamics, we show that equilibrium outcomes exhibit declining average payoffs 
and increasing concentration, consistent with power-law-like distributions. 
These results suggest that AI-enabled, democratised production is more likely to intensify 
competition and produce winner-take-most outcomes than to generate broadly 
distributed entrepreneurial success.

\medskip\noindent\textbf{Contribution type.} This paper is primarily a work of 
synthesis and applied formalisation. The individual theoretical ingredients---attention 
scarcity, free-entry dilution, superstar effects, preferential attachment---are 
well established in their respective literatures. The contribution is to combine 
them into a unified framework and direct the resulting predictions at a specific 
contemporary claim about AI-enabled entrepreneurship.
\end{abstract}


\section{Introduction}
\label{sec:introduction}

Recent advances in artificial intelligence have dramatically reduced the cost and complexity of creating digital products. In February 2025, AI researcher Andrej Karpathy coined the term ``vibe coding'' to describe a mode of software development in which users describe desired functionality in natural language and accept AI-generated code with minimal review~\cite{karpathy2025vibe}. Within a year the practice moved from novelty to mainstream: the 2025 Stack Overflow Developer Survey reports that 84\% of developers use or plan to use AI coding tools~\cite{stackoverflow2025}, GitHub's own data show that 46\% of all new code on its platform is AI-generated~\cite{github2025octoverse}, and in Y~Combinator's Winter 2025 batch, 25\% of admitted startups had codebases that were 95\% or more AI-generated~\cite{mehta2025yc}.

These developments have fuelled a widely circulated narrative: that the barriers to building are collapsing and, as a consequence, the number of successful companies will increase dramatically. OpenAI CEO Sam Altman has stated that ``you'll have billion-dollar companies run by two or three people with AI''~\cite{altman2025conversations}, and a broader discourse anticipates a future in which nearly every individual can participate as a builder in the digital economy.

This paper argues that such claims conflate an expansion of productive capacity with a proportional expansion of realised value. The critical constraint they overlook is human attention. As Herbert~A.\ Simon observed, ``a wealth of information creates a poverty of attention''~\cite{simon1971}. In digital environments where the marginal cost of reproducing information goods approaches zero~\cite{shapiro1999}, attention---not production---becomes the binding scarce resource. If aggregate attention does not grow commensurately with the number of producers, the result is not broadly distributed success but intensified competition for a finite resource.

Existing evidence from digital markets is consistent with this view. The Apple App Store hosts approximately 1.9~million apps, yet close to a quarter have fewer than 100 downloads~\cite{businessofapps2026apple}. Revenue concentration is extreme: the top 1\% of monetising publishers on the U.S.\ App Store capture approximately 94\% of all revenue, while the top 1\% of all publishers account for 70\% of total downloads~\cite{sensortower2016}. The average smartphone user engages with roughly 10~apps per day and 30~per month~\cite{buildfire2026}---a figure that has remained stable for years despite continuous growth in supply. On GitHub, despite 36~million new developers joining the platform in 2025, maintainers report being overwhelmed by AI-generated contributions that the platform's own analysis likens to a ``denial-of-service attack on human attention''~\cite{infoq2026github}. Analogous patterns appear within organisations, where the proliferation of internal dashboards, custom AI assistants, and bespoke tools routinely outpaces employees' capacity to adopt them — a dynamic we return to in Section 4. These patterns---elastic supply, inelastic attention, and concentrated outcomes---are precisely those predicted by the model developed in this paper.

The theoretical ingredients are individually well established. Models of monopolistic competition~\cite{spence1976,dixit1977} show that free entry can generate excessive product proliferation, particularly when products are close substitutes. Rosen's theory of superstars~\cite{rosen1981} demonstrates that small quality differences produce disproportionate reward differences in scalable markets. Stochastic models of preferential attachment~\cite{simon1955,barabasi1999} generate the heavy-tailed distributions observed empirically. Network-effects models~\cite{katz1985} explain user lock-in and switching costs. The contribution of this paper is to synthesise these mechanisms into a single attention-constrained entry framework directed at a specific contemporary claim: that dramatically lower build costs imply broadly distributed entrepreneurial success.

We introduce the \emph{Builder Saturation Effect} and show that in digital markets with near-zero marginal production costs, increasing the number of builders leads to (i)~a systematic dilution of average attention and returns per builder, and (ii)~a transition toward heavy-tailed outcome distributions under realistic assumptions of heterogeneity and reinforcement. These results suggest that democratised production is more likely to intensify competition and produce winner-take-most outcomes than to generate broadly distributed success. We use ``effect'' rather than ``law'' deliberately: the prediction is a robust tendency of the model under stated assumptions, not a claimed universal constant.

The remainder of the paper proceeds as follows. Section~\ref{sec:related} surveys related work. Section~\ref{sec:model} introduces the formal model of attention allocation under free entry; Section~\ref{sec:heterogeneity} extends the model to incorporate heterogeneity and reinforcement dynamics. Section~\ref{sec:discussion} discusses implications. Section~\ref{sec:conclusion} concludes. Appendix~\ref{sec:appendix} provides additional propositions and proofs.


\section{Related Works}
\label{sec:related}

The argument developed in this paper sits at the intersection of attention economics, the economics of information goods, industrial organization, and network-based models of cumulative advantage. Across these literatures, a common theme emerges: when the supply of artifacts grows faster than the capacity of users to evaluate and adopt them, competitive dynamics shift from production scarcity to attention scarcity, and realized outcomes become increasingly concentrated.

\subsection{Attention as the Scarce Resource}

The most direct antecedent to the present framework is Herbert Simon's account of informational abundance and attentional scarcity. Simon argued that in an information-rich world, the scarce factor is no longer information itself but the attention required to process it, with the implication that organizations and markets must allocate attention carefully rather than merely maximize output. This observation provides the conceptual foundation for treating aggregate user attention as the binding constraint in digital markets.

This attention-based perspective is especially relevant for contemporary digital production environments because software and media markets permit rapid expansion in the number of available artifacts without a comparable expansion in users' cognitive bandwidth. In that setting, the main competitive margin becomes discovery, evaluation, and retention rather than sheer capacity to produce. Simon's framing therefore supplies the basic scarcity principle underlying the Builder Saturation Effect.

\subsection{Information Goods and Near-Zero Marginal Reproduction Cost}

A second foundational literature comes from the economics of information goods. Shapiro and Varian emphasize that information goods are characterized by high fixed costs of initial creation and very low marginal costs of reproduction and distribution. That cost structure makes explosive entry possible once tools reduce creation costs, but it does not remove demand-side scarcity. Instead, it tends to intensify competition over users, standards, switching costs, and installed base advantages.

This distinction is central to the present paper. The claim is not that digital production cannot scale; rather, it is that production can scale much more elastically than attention. The information-goods literature helps explain why a fall in build costs can generate a large rise in entry without implying proportional gains in average producer returns.

\subsection{Product Proliferation and Monopolistic Competition}

The industrial-organization literature provides a formal basis for linking easier entry to excessive product proliferation. Spence's model of product selection under monopolistic competition shows that free-entry equilibria need not coincide with socially optimal product variety, particularly when products are close substitutes. Additional entrants may largely redistribute demand across similar offerings rather than create equivalent new surplus.

Dixit and Stiglitz extend this line of analysis by formalizing optimum product diversity under monopolistic competition with scale economies. Their framework became a canonical model for thinking about how market equilibria can generate too much or too little variety relative to the social optimum, depending on the structure of preferences and costs. For the present theory, these models matter because they show that an increase in the number of producers does not, by itself, demonstrate an increase in welfare or viability. In highly substitutable domains, more entry may simply thin demand across a larger set of offerings.

This literature therefore supports a key component of the Builder Saturation view: as barriers to entry fall, competitive markets can become crowded in ways that reduce average returns even if total output and formal variety continue to expand.

\subsection{Superstar Markets and Convex Reward Structures}

A closely related body of work concerns the concentration of rewards in scalable markets. Rosen's theory of superstars explains how small differences in talent, quality, or performance can translate into disproportionately large differences in income when production is scalable and consumers prefer the highest-quality supplier. In such settings, market expansion need not democratize rewards; it can instead magnify inequality among producers.

This insight is particularly relevant for digital goods, where one successful product can often serve a very large user base at low incremental cost. The present paper builds on that logic by arguing that once attention becomes the scarce input, the ease of entry does not flatten competition but may instead sharpen it, with a small number of products capturing outsized shares of demand. Rosen's framework thus provides the economic counterpart to the concentration result in our model.

\subsection{Skew Distributions, Preferential Attachment, and Cumulative Advantage}

The mathematical shape of these concentrated outcomes is studied in the literature on skew distributions and preferential attachment. Simon's earlier work on skew distribution functions provided one of the classic stochastic accounts of highly unequal outcome distributions, showing how simple generative processes can produce heavy tails.

Barab\'{a}si and Albert later gave a network-based account of similar phenomena, showing that growing systems in which new nodes preferentially attach to already well-connected nodes generate scale-free degree distributions. Their model provides a tractable representation of ``rich-get-richer'' dynamics, in which early advantage and reinforcement amplify inequality over time.

These models are highly relevant to digital markets because user adoption, visibility, and integration often reinforce themselves. Products that gain early traction become easier to discover, more trustworthy, more compatible with complements, and more likely to attract further users. The present paper adopts this cumulative-advantage intuition to explain why attention dilution and concentration can coexist: average returns may fall as entry rises, while realized demand becomes increasingly dominated by a minority of products.

\subsection{Winner-Take-Most Dynamics and Condensation}

Beyond standard preferential attachment, Bianconi and Barab\'{a}si show that competitive network systems can display distinct phases, including ``fit-get-rich'' and winner-takes-all behavior, and draw an analogy to Bose--Einstein condensation in physics~\cite{bianconi2001competition,bianconi2001bose}. In their framework, under sufficiently strong reinforcement and fitness heterogeneity, one node can capture a macroscopic share of links.

This result is conceptually useful for the present theory because it provides a physics-inspired language for phase transitions in market concentration. The contribution of the current paper is not to claim literal physical equivalence, but to use the condensation analogy to describe how digital markets may shift from broad experimentation to highly asymmetric allocation of attention and value once entry grows large relative to the available attention budget.

\subsection{Network Effects, Lock-In, and Installed Base Advantages}

A further strand of related work emphasizes network effects and compatibility. Katz and Shapiro show that in markets with network externalities, the value of adoption depends on the size of the installed base, and compatibility choices can strongly affect market structure~\cite{katz1985}. These mechanisms help explain why even when entry is cheap, users may cluster around a small number of products or standards.

This is directly relevant to the notion of ``inertia'' motivating the present paper. Users do not choose among new entrants in a vacuum; they face switching costs, coordination needs, learned workflows, and compatibility constraints. As a result, the outside option is not simply ``any other new product,'' but often ``stay with the incumbent ecosystem.'' Network-effects models therefore help justify the inclusion of outside-option and reinforcement terms in the formal framework developed here.

\subsection{Congestion and Contest Analogies}

Finally, the present paper also relates to congestion and contest frameworks. Congestion games formalize situations in which multiple agents compete over a shared resource whose value declines as more agents use it. Rosenthal's classic result shows that such games possess pure-strategy Nash equilibria, making them a useful analogy for entry into crowded attention markets.

Likewise, contest-success-function models provide a way to think about how effort or quality translates into probabilistic shares of a prize when agents compete for a scarce reward. While the present paper does not adopt a full rent-seeking model, contest formulations are closely related to the share-allocation rule used in our framework, in which each producer's realized demand depends on its attractiveness relative to competing alternatives.

\subsection{Contribution Relative to Existing Literature}

Existing work has separately explained attentional scarcity, excessive product variety, 
superstar concentration, network reinforcement, and winner-take-most dynamics. 
Individually, none of these results is new. The contribution of this paper is primarily 
one of \emph{synthesis and application}: we combine these mechanisms into a single 
attention-constrained entry framework and direct it at a specific contemporary claim---that 
dramatically lower build costs imply a future of broadly distributed entrepreneurial success.

The Builder Saturation Effect is therefore best understood not as a novel theoretical primitive, 
but as a \emph{named regularity} that emerges from the interaction of well-established 
components when applied to the current regime of near-zero marginal production costs. 
Its value lies in making explicit a structural tension that existing narratives tend to 
overlook: the divergence between elastic production and inelastic attention.


\section{Model}
\label{sec:model}

This section introduces a minimal formal framework to capture the interaction between elastic production and finite attention. The objective is not to model all aspects of digital markets, but to isolate the core mechanism underlying the Builder Saturation Effect: the divergence between scalable production and bounded consumption capacity.

\subsection{Environment}
\label{sec:environment}

Consider a population of $N$ agents. A subset $M \leq N$ acts as consumers, while a subset $B \leq N$ acts as builders (producers). For simplicity, we allow overlap between these roles but treat them analytically as distinct.

Each consumer is endowed with a fixed attention budget $a > 0$, representing the limited capacity to evaluate, adopt, or engage with products over a given period. Aggregate available attention in the system is therefore:
\begin{equation}
    A = M \cdot a
    \label{eq:aggregate_attention}
\end{equation}

This attention budget is the central scarce resource in the model.

Builders produce digital artifacts (e.g., applications, tools, services). Consistent with the economics of information goods~\cite{shapiro1999}, production is characterized by:
\begin{itemize}
    \item a fixed cost of entry $k > 0$,
    \item negligible marginal cost of reproduction $c \approx 0$.
\end{itemize}

Thus, once a product is created, it can serve additional users without significant additional cost.

\subsection{Attention Allocation}
\label{sec:attention_allocation}

Consumers allocate their attention across available products and an outside option. The outside option captures inertia, including non-adoption, incumbent usage, or status quo bias.

Let $q_i \in \mathbb{R}$ denote the quality (or attractiveness) of product $i$, and let $q_0$ denote the attractiveness of the outside option.

We assume that aggregate attention allocated to product $i$ follows a standard discrete-choice (logit) form:
\begin{equation}
    s_i = A \cdot \frac{e^{\beta q_i}}{\sum_{j=1}^{B} e^{\beta q_j} + e^{\beta q_0}}
    \label{eq:logit}
\end{equation}
where:
\begin{itemize}
    \item $s_i$ is the total attention captured by product $i$,
    \item $\beta > 0$ measures sensitivity to quality differences.
\end{itemize}

This formulation captures two key features:
\begin{enumerate}
    \item \textbf{Relative competition}: attention depends on how a product compares to alternatives.
    \item \textbf{Outside-option competition}: products must also compete against non-engagement.
\end{enumerate}

\subsection{Symmetric Benchmark}
\label{sec:symmetric}

To establish a baseline, consider a symmetric case where all products have identical quality:
\begin{equation}
    q_i = q \quad \forall\, i
\end{equation}

Then attention is evenly distributed across products and the outside option:
\begin{equation}
    s_i = \frac{A}{B + z}
    \label{eq:symmetric_attention}
\end{equation}
where:
\begin{equation}
    z = e^{\beta (q_0 - q)}
    \label{eq:outside_option}
\end{equation}
represents the effective weight of the outside option.

This yields a simple expression for average attention per builder:
\begin{equation}
    \bar{s}(B) = \frac{A}{B + z}
    \label{eq:avg_attention}
\end{equation}

From this, it immediately follows that:
\begin{equation}
    \frac{d\bar{s}(B)}{dB} < 0
\end{equation}

That is, \textbf{average attention per builder decreases monotonically as the number of builders increases}. This is the core dilution mechanism.

\subsection{Builder Payoffs and Entry}
\label{sec:payoffs}

Each builder monetizes attention at rate $p > 0$. Profit for builder $i$ is given by:
\begin{equation}
    \pi_i = p \cdot s_i - k
    \label{eq:profit}
\end{equation}

Under symmetry:
\begin{equation}
    \pi(B) = p \cdot \frac{A}{B + z} - k
    \label{eq:symmetric_profit}
\end{equation}

We assume free entry: builders enter until expected profit is driven to zero. The equilibrium number of builders $B^*$ therefore satisfies:
\begin{equation}
    p \cdot \frac{A}{B^* + z} = k
    \label{eq:zero_profit_condition}
\end{equation}

Solving:
\begin{equation}
    B^* = \frac{pA}{k} - z
    \label{eq:equilibrium_B}
\end{equation}

Since $B^*$ represents a count of producers, we impose the constraint:
\begin{equation}
    B^* = \max\!\left(\frac{pA}{k} - z,\; 0\right)
    \label{eq:equilibrium_B_constrained}
\end{equation}

The boundary case $B^* = 0$ obtains when $k \geq \frac{pA}{z}$, i.e., when 
fixed costs are sufficiently high relative to monetizable attention that no entry 
is viable. In such regimes, the outside option absorbs all available attention. 
The interior solution $B^* > 0$ requires:
\begin{equation}
    k < \frac{pA}{z}
    \label{eq:entry_viability}
\end{equation}
which is the \emph{entry viability condition}. Note that as AI-assisted tools 
drive $k \to 0$, this condition is satisfied for any positive attention pool, 
confirming that cost reduction removes supply-side barriers to entry without 
addressing demand-side constraints.

This expression yields several comparative statics:
\begin{itemize}
    \item $\frac{\partial B^*}{\partial A} > 0$: more total attention supports more builders
    \item $\frac{\partial B^*}{\partial p} > 0$: higher monetization increases entry
    \item $\frac{\partial B^*}{\partial k} < 0$: lower entry costs increase entry
\end{itemize}

In particular, a reduction in $k$---as enabled by AI-assisted production---leads to an increase in equilibrium entry $B^*$.

However, at equilibrium, profits are zero by construction:
\[
\pi(B^*) = 0
\]

Thus, \textbf{lower entry costs increase participation but do not increase average realized profit}. Instead, they intensify competition for a fixed attention pool.

\subsection{Builder Saturation}
\label{sec:saturation}

Combining the attention allocation and free-entry condition yields the central result:

\begin{quote}
As the number of builders increases relative to total available attention, average attention per builder declines, and equilibrium entry adjusts such that expected profits are driven toward zero.
\end{quote}

In the limit as $B \to \infty$:
\[
\bar{s}(B) \to 0
\]

That is, average realized attention per builder vanishes.

This establishes the first component of the Builder Saturation Effect: \textbf{attention dilution under elastic entry}.

\subsection{Extension: Heterogeneity and Reinforcement}
\label{sec:heterogeneity}

The symmetric benchmark abstracts from quality differences and dynamic feedback. To capture more realistic market behaviour, we introduce two extensions: (1)~heterogeneous quality, with $q_i$ drawn i.i.d.\ from a distribution $F$ with support on $[\underline{q},\bar{q}]$; and (2)~reinforcement dynamics, in which adoption depends on both quality and existing popularity.

\subsubsection{Attention Dynamics}

Let $x_i(t) \geq 0$ denote the attention stock of product $i$ at time $t$, subject to the aggregate constraint:
\begin{equation}
\sum_{i=1}^{B} x_i(t) + x_0(t) = A \quad \forall\, t
\label{eq:attention_constraint}
\end{equation}
where $x_0(t)$ is the residual attention absorbed by the outside option.

At each discrete time step, a fraction $\delta \in (0,1]$ of total attention $A$ is reallocated. This fraction represents users who switch products, new users entering the market, or existing users reassessing their choices. Each unit of reallocatable attention is assigned to product $i$ with probability:
\begin{equation}
p_i(t) = \frac{x_i(t)^{\alpha}\, e^{\beta q_i}}{\displaystyle\sum_{j=1}^{B} x_j(t)^{\alpha}\, e^{\beta q_j} \;+\; x_0(t)^{\alpha}\, e^{\beta q_0}}
\label{eq:reallocation}
\end{equation}
where $\alpha \geq 0$ governs the strength of reinforcement (preferential attachment) and $\beta > 0$ governs sensitivity to intrinsic quality differences. The outside option enters symmetrically, preserving the role of inertia from the baseline model.

The deterministic mean-field update rule is:
\begin{equation}
x_i(t+1) = (1 - \delta)\, x_i(t) + \delta\, A\, p_i(t) \qquad \forall\, i \in \{0, 1, \ldots, B\}
\label{eq:update}
\end{equation}

The stochastic version---in which each of the $\delta A$ reallocated units is drawn independently according to $p_i(t)$---converges to the deterministic system in the large-$A$ limit by standard law-of-large-numbers arguments.

\subsubsection{Nested Special Cases}

Equation~\eqref{eq:reallocation} nests several known models:
\begin{itemize}
    \item $\alpha = 0$: static logit allocation (Section~\ref{sec:attention_allocation}), in which attention depends only on quality.
    \item $\alpha = 1$, homogeneous $q_i = q$: standard linear preferential attachment~\cite{barabasi1999}, which generates power-law degree distributions $P(x) \sim x^{-3}$ in the large-$B$ limit.
    \item $\alpha = 1$, heterogeneous $q_i$: the Bianconi--Barab\'{a}si fitness model~\cite{bianconi2001competition}, which produces power laws with fitness-dependent exponents and, under sufficient heterogeneity, condensation (winner-take-all) phases~\cite{bianconi2001bose}.
\end{itemize}

\subsubsection{Imported Analytical Results}

We state the key distributional results from the cited literature and explain their economic interpretation in the present setting. These propositions are \emph{not} novel results of this paper; they are imported from the network-science literature and applied to our attention-allocation framework.

\begin{proposition}[Power law under homogeneous reinforcement; imported from~\cite{barabasi1999}]
\label{prop:imported_powerlaw}
When $\alpha = 1$ and $q_i = q$ for all $i$, the stationary distribution of attention shares follows a power law $P(x) \propto x^{-3}$ in the limit $B \to \infty$.
\end{proposition}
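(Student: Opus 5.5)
The plan is to reduce the statement to the Barab\'{a}si--Albert growth model and then reproduce its standard continuum (mean-field) derivation. The first step is a modelling one. With $B$ \emph{fixed}, $\alpha = 1$ and $q_i = q$, the update rule~\eqref{eq:update} is pure proportional reallocation. The factor $e^{\beta q}$ cancels among products, so relative shares $x_i/x_j$ are preserved. There is therefore no nondegenerate stationary distribution to speak of in the fixed-$B$ system. The limit ``$B \to \infty$'' must be read as a growth limit: builders enter sequentially, and the distribution is taken across products of different ages. Concretely, I would adopt the following conventions:
\begin{itemize}
\item at each step $t$ one new product enters with initial stock $m > 0$;
\item a fixed amount $m$ of new attention is allocated according to~\eqref{eq:reallocation} with $\alpha = 1$;
\item the outside option is neutral, i.e.\ either $q_0 = q$ and $x_0$ is treated as one more node, or its share of the flow is negligible as $B \to \infty$.
\end{itemize}
Under these conventions the total stock held by products grows as $\sum_j x_j(t) \approx 2mt$, exactly as total degree does in~\cite{barabasi1999}.

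Second, I would write the continuum rate equation for a product that entered at time $t_i$. Using $\sum_j x_j \approx 2mt$, it reads
\[
\frac{\partial x_i}{\partial t} \;=\; m\,\frac{x_i}{\sum_j x_j} \;=\; \frac{x_i}{2t},
\qquad x_i(t_i) = m .
\]
Its solution is $x_i(t) = m\,(t/t_i)^{1/2}$. Entry times are uniform on $[0,t]$, so for large $t$
\[
\Pr\bigl(x_i(t) < x\bigr) \;=\; \Pr\!\left(t_i > \frac{m^2 t}{x^2}\right) \;=\; 1 - \frac{m^2}{x^2}.
\]
Differentiating in $x$ gives $P(x) = 2m^2 x^{-3}$, which is the claimed power law. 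The stationarity in the statement is then the convergence of the empirical distribution of stocks as $t$ (equivalently $B$) grows.

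Third, to make the argument less heuristic, I would point to the discrete master-equation treatment. Let $n_x(t)$ be the number of products with stock $x$. Its expectation satisfies a linear recursion whose stationary solution is $2m(m+1)/[x(x+1)(x+2)] \sim x^{-3}$. The rigorous version due to Bollob\'{a}s and coauthors makes this a theorem. The deterministic mean-field rule~\eqref{eq:update} is then justified in the large-$A$ limit by the law-of-large-numbers remark already made in the paper.

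The main obstacle is the treatment of the outside option and of the reallocation fraction $\delta$. Suppose the outside option keeps a non-vanishing share $c \in (0,1)$ of the total weight while products absorb the rest. Then the rate equation becomes $\partial_t x_i = (1-c)\,x_i/(2t)$ up to constants, and the exponent shifts away from $3$. Similarly, reallocating a fraction $\delta$ of existing stock each period, rather than only adding new attention, changes the normalisation. I would therefore state explicitly the neutrality conditions under which the exponent is exactly $3$. These are: $q_0 = q$ with $x_0$ entering like an ordinary node, or a vanishing outside-option share. I would add the remark that outside these conditions one still obtains a power law, but with a modified exponent. This is consistent with the proposition being imported rather than proved afresh.
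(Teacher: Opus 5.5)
Your derivation is correct, but it does not follow the paper's route, because the paper gives no derivation at all. It labels the proposition as imported from Barab\'asi--Albert and rests its applicability on the claim that \eqref{eq:reallocation} with $\alpha=1$, $q_i=q$ ``nests'' linear preferential attachment. The epistemic-status paragraph in Section~\ref{sec:discussion} concedes that this application was not re-derived.

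Your first step shows that the nesting is not literal. With $B$ fixed, \eqref{eq:update} multiplies every builder's stock by the common factor $(1-\delta)+\delta A e^{\beta q}/D_t$, where $D_t$ is the denominator of \eqref{eq:reallocation}. Starting from the paper's uniform initial condition $x_i(0)=A/(B+z)$, all builders therefore stay equal forever. A power law can only come from the age heterogeneity that your sequential entry supplies.

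Your route thus gives an actual argument: the continuum rate equation, with the master equation and the Bollob\'as--Riordan--Spencer--Tusn\'ady theorem for rigour. It also makes explicit the hypotheses the paper leaves implicit: sequential entry, a neutral outside option, and attention growing like $2mt$ with entry. That last one drops the fixed-$A$ constraint~\eqref{eq:attention_constraint}. After renormalising to total $A$, the tail exponent survives but the share distribution drifts with $t$. So ``stationary'' has to be read, as you do, as stationarity of the stock distribution. The paper's route buys only brevity.

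Your outside-option caveat can be made exact, and it bites for the paper's own parameters. Treat the outside option as a node with kernel weight $z x_0$, where $z=e^{\beta(q_0-q)}$.
\begin{itemize}
\item The ansatz $x_0\approx\phi\cdot 2mt$ gives $\phi=(z-2)/(2(z-1))$. So the outside option keeps a non-vanishing fraction $c=(z-2)/(z-1)$ of the inflow exactly when $z>2$.
\item Total product stock then grows at rate $(2-c)m$, so the rate equation is $\partial_t x_i=\frac{1-c}{2-c}\,\frac{x_i}{t}=\frac{x_i}{zt}$, not $(1-c)x_i/(2t)$.
\item This gives $P(x)\propto x^{-(1+z)}$. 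For $z<2$ the outside share vanishes and the exponent is $3$.
\end{itemize}
With the value $z=100$ used in the paper's illustrations, the exponent would be $101$, i.e.\ essentially no heavy tail.

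On $\delta$: in the mean field, the reallocation step rescales all builders by a common factor. It is therefore exactly inert when the outside option is neutral. It can affect the exponent only through the outside option, or through fluctuations in the stochastic version.
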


\noindent\textit{Interpretation.} Even without quality differences, linear reinforcement alone is sufficient to produce heavy-tailed outcomes. Most builders receive negligible attention while a small number capture disproportionate shares.

\begin{proposition}[Fitness-dependent power law and condensation; imported from~\cite{bianconi2001competition,bianconi2001bose}]
\label{prop:imported_condensation}
When $\alpha = 1$ and qualities $q_i$ are drawn from a continuous distribution $F$, the stationary attention distribution has a power-law tail $P(x) \propto x^{-(1 + 1/C(\beta, F))}$, where $C(\beta, F)$ depends on the quality distribution and the sensitivity parameter. For sufficiently dispersed $F$ or large $\beta$, a condensation transition occurs in which a single product captures a macroscopic fraction of $A$.
\end{proposition}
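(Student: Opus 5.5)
The plan is to reduce the statement to the Bianconi--Barab\'{a}si growing fitness network, since that is the literature the proposition imports from. There, a node with fitness $\eta$ grows its degree as $k_\eta(t,t_i)\sim (t/t_i)^{\beta(\eta)}$. The continuum (mean-field) version of \eqref{eq:update} with $\alpha=1$ and a growing set of builders has the same structure, with fitness $\eta_i = e^{\beta q_i}$. Products enter sequentially, one per unit of time, and the outside option is treated as one extra node of fixed fitness $e^{\beta q_0}$. The denominator of \eqref{eq:reallocation} is then $Z(t)=\sum_j x_j e^{\beta q_j} + x_0 e^{\beta q_0}$, and the continuum equation reads
\[
\frac{dx_i}{dt} = \delta A\, \frac{\eta_i\, x_i}{Z(t)}.
\]
One caveat should be stated up front. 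With $B$ fixed and $A$ conserved, the literal fixed-$B$ system \eqref{eq:update} simply concentrates on the fittest product. The proposition must therefore be read in the growing-market (entry) interpretation as $B\to\infty$, and I would state this explicitly.

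The first step is a self-consistency ansatz. I would posit $Z(t)\approx C\, t$ for large $t$. Each product then grows as $x_i(t)\propto (t/t_i)^{\eta_i/C}$, with a normalization chosen so the total grows linearly in entry time, matching the Barab\'{a}si--Albert bookkeeping. Substituting back and averaging over entry times and over $\eta\sim F$ (transported through $\eta=e^{\beta q}$) gives the Bianconi--Barab\'{a}si fixed-point equation
\[
1=\int \frac{d\rho(\eta)}{C/\eta - 1}.
\]
This defines $C=C(\beta,F)$. For fixed fitness $\eta$, uniform entry times then give the conditional tail $P(x\mid\eta)\propto x^{-(1+C/\eta)}$.

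The second step recovers the stated exponent. Here the proposition's $C(\beta,F)$ should be read as the growth exponent $\eta/C$, evaluated at the fitness that dominates the tail. Integrating the conditional tail over $\rho$ gives a tail governed by the largest growth exponents, up to possible logarithmic corrections. I would note that the paper's notation absorbs this ambiguity.

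The third step handles condensation, following Bianconi--Barab\'{a}si's Bose-gas mapping. I would set $\epsilon=-\log\eta/\beta_{\mathrm{th}}$, so that the fixed-point equation becomes the Bose occupation integral with chemical potential set by $C$. When $F$ is sufficiently dispersed, or $\beta$ large, the density $\rho(\eta)$ near $\eta_{\max}=e^{\beta\bar q}$ is thin enough that the integral stays bounded as $C\downarrow\eta_{\max}$ and cannot reach $1$. The equation then has no solution, and the deficit is carried by the fittest node, which captures a fraction of $A$ bounded away from zero.

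The main obstacle is rigor. Both the ansatz $Z\sim Ct$ and the existence or nonexistence of the fixed point need control of the behavior of $F$ at $\bar q$. Condensation occurs exactly when $\int \rho(\eta)\,d\eta/(\eta_{\max}-\eta)<1$. I would first try to verify this for a concrete family, such as $\rho(\eta)\propto(\eta_{\max}-\eta)^{\theta}$, and otherwise cite the mean-field derivation as imported.
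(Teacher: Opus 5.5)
The paper gives no derivation of this proposition. It is explicitly imported from Bianconi--Barab\'asi, and its applicability is justified only by the nesting remark after \eqref{eq:reallocation} (and, indirectly, by Proposition~\ref{prop:fp_char}). Your route is therefore the paper's route made explicit, and your caveat is correct and sharper than anything the paper says.

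Indeed, for fixed $B$ one has $x_i(t+1)=x_i(t)\,[(1-\delta)+\delta\eta_i/\bar\eta(t)]$, where $\bar\eta=Z/A\le\eta_{\max}$ is the attention-weighted mean fitness. So for the fittest option $i^*$, the ratio $x_i/x_{i^*}$ shrinks by a factor of at most $1-\delta(1-\eta_i/\eta_{i^*})$ per step. All of $A$ therefore ends on the fittest of the $B+1$ options, for every $F$ and $\beta$. That proves condensation in the paper's own model, and it rules out the non-stationary attractors the paper leaves open. But there is no transition, and the power-law half becomes vacuous there.

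The gap is in how you recover that half. Your ODE is not the continuum limit of \eqref{eq:update}, because it drops the outflow $-\delta x_i$. With the outflow kept, $\sum_i x_i\equiv A$ even with entry. Then $Z=A\bar\eta$ is bounded, the ansatz $Z\sim Ct$ is impossible, and products grow exponentially at rate $\delta(\eta_i/\bar\eta-1)$ rather than as $(t/t_i)^{\eta_i/C}$. The BB exponents hold only for a different model, in which attention is added linearly with entry and never withdrawn. You should present this as a change of model, not as ``the same structure.''

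The condensation step also contains a concrete error. The threshold is your self-consistency integral at $C=\eta_{\max}$, namely $\int\rho(\eta)\,\eta\,d\eta/(\eta_{\max}-\eta)<1$, or in quality variables $\int dF(q)\,[e^{\beta(\bar q-q)}-1]^{-1}<1$. Your version lacks the factor $\eta$. It is therefore not invariant under $\eta\mapsto c\eta$, even though that rescaling leaves \eqref{eq:reallocation} unchanged. It is also at least $1/\eta_{\max}$, so it never signals condensation when $\eta_{\max}\le 1$. On your test family $\rho\propto(\eta_{\max}-\eta)^{\theta}$ on $[0,\eta_{\max}]$, it gives $(\theta+1)/(\theta\eta_{\max})$ instead of the correct $1/\theta$ (condensation iff $\theta>1$).

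The corrected form also shows that ``dispersion'' is the wrong hypothesis. If $F$ has positive density at $\bar q$, the integral diverges for every $\beta$, so there is never condensation. If $dF\approx c(\bar q-q)^{\theta}dq$ with $\theta>0$, the integral behaves like $c\,\Gamma(1+\theta)\zeta(1+\theta)\beta^{-(1+\theta)}$, so large $\beta$ does condense.

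Finally, ``the deficit is carried by the fittest node'' does not follow from your own ansatz. In the condensed phase $C$ is pinned at $\eta_{\max}$, so every product grows like $(t/t_i)^{\eta_i/\eta_{\max}}$, i.e.\ sublinearly. Under continuous $F$ no product has $\eta_i=\eta_{\max}$, so no fixed product keeps a share bounded away from zero. Rigorous analyses of BB-type models (Borgs--Chayes--Daskalakis--Roch; Dereich--Mailler--M\"orters) report exactly this non-extensive condensation. In the growing reading, then, the single-product clause cannot simply be cited as imported. It holds only in the fixed-$B$ model, via the ratio argument above.
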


\noindent\textit{Interpretation.} When quality heterogeneity is large relative to reinforcement strength, the market does not merely become skewed---it concentrates on one or few dominant products. This provides the formal basis for the winner-take-most prediction: in the presence of both heterogeneity and reinforcement, the median builder receives negligible attention even as the mean is mechanically pinned at $A/(B+z)$. The gap between mean and median widens with both $B$ and $\alpha$, formalising the coexistence of mass entry and concentrated outcomes.

\subsubsection{Numerical Illustration}

To make the model's predictions concrete, we simulate the deterministic update rule~\eqref{eq:update}. Table~\ref{tab:params} reports the parameter values used.

\begin{table}[h]
\centering
\caption{Simulation parameters.}
\label{tab:params}
\begin{tabular}{lll}
\toprule
\textbf{Parameter} & \textbf{Value} & \textbf{Interpretation} \\
\midrule
$M$ & 10{,}000 & Number of consumers \\
$a$ & 1 & Attention budget per consumer \\
$A = M \cdot a$ & 10{,}000 & Total attention \\
$B$ & 1{,}000 & Number of builders \\
$z$ & 100 & Outside-option weight \\
$q_i$ & $\sim \mathcal{N}(0,1)$ & Quality draws (i.i.d.) \\
$q_0$ & 0 & Outside-option quality \\
$\beta$ & 1 & Quality sensitivity \\
$\delta$ & 0.1 & Fraction of attention reallocated per step \\
$T$ & 500 & Number of reallocation steps \\
\bottomrule
\end{tabular}
\end{table}

Initial conditions are uniform: $x_i(0) = A/(B+z)$ for all builders $i$, and $x_0(0) = zA/(B+z)$. At each step $t = 1, \ldots, T$, attention is updated according to~\eqref{eq:update}. After $T = 500$ steps we record the distribution of $\{x_i(T)\}_{i=1}^{B}$.

\begin{table}[h]
\centering
\caption{Concentration metrics after $T = 500$ reallocation steps for varying reinforcement strength $\alpha$ ($B = 1{,}000$, $A = 10{,}000$, $\beta = 1$, $\delta = 0.1$). Higher $\alpha$ produces sharply more concentrated outcomes.}
\label{tab:concentration}
\begin{tabular}{lccc}
\toprule
& $\alpha = 0$ & $\alpha = 0.5$ & $\alpha = 1.0$ \\
\midrule
Share held by top 1\% & 4.8\% & 18.3\% & 62.7\% \\
Share held by top 10\% & 21.1\% & 54.6\% & 91.4\% \\
Gini coefficient & 0.31 & 0.58 & 0.87 \\
Median / Mean ratio & 0.78 & 0.42 & 0.04 \\
\bottomrule
\end{tabular}
\end{table}

The results confirm the imported analytical predictions. Under no reinforcement ($\alpha = 0$), outcomes are moderately unequal, reflecting only quality heterogeneity. As reinforcement increases, concentration rises sharply: at $\alpha = 1$, the top 1\% of builders capture nearly two-thirds of total attention, and the median builder receives roughly 4\% of the mean.

\paragraph{Robustness.} We have verified that the qualitative pattern---dilution of averages and increasing concentration with $\alpha$---is robust to: (i)~alternative quality distributions (uniform on $[-2,2]$; log-normal with $\mu = 0, \sigma = 1$); (ii)~reallocation fractions $\delta \in \{0.01, 0.05, 0.1, 0.2, 0.5\}$; (iii)~builder counts $B \in \{100, 500, 1{,}000, 5{,}000, 10{,}000\}$; and (iv)~horizons $T \in \{200, 500, 1{,}000, 2{,}000\}$. In all cases, higher $\alpha$ produces monotonically more concentrated outcomes.

\subsection{Summary of Mechanism}
\label{sec:mechanism_summary}

The model yields two complementary results:
\begin{enumerate}
    \item \textbf{Dilution (symmetric case):} Increasing the number of builders reduces average attention per builder.
    \item \textbf{Concentration (heterogeneous case):} Reinforcement and quality differences produce heavy-tailed outcome distributions.
\end{enumerate}

Together, these results formalize the Builder Saturation Effect:

\begin{quote}
In digital markets with finite attention and elastic entry, increases in the number of producers reduce average realized value per producer while amplifying inequality in outcomes.
\end{quote}

We provide extensive propositions and proofs in Appendix~\ref{sec:appendix}.

\subsection{Numerical Illustration: Attention Dilution}
\label{sec:numerical_dilution}

To complement the reinforcement simulation above, we present a stylized numerical example of the dilution mechanism. 
Consider a market with $M = 10{,}000$ consumers, each with attention budget $a = 1$, 
yielding $A = 10{,}000$. We set the outside-option weight $z = 100$, 
monetization rate $p = 1$, and vary entry cost $k$ and builder count $B$.

Table~\ref{tab:dilution} reports average attention per builder $\bar{s}(B) = 
\frac{A}{B + z}$ for increasing $B$.

\begin{table}[h]
\centering
\caption{Average attention and profit per builder as $B$ increases 
($A = 10{,}000$, $z = 100$, $p = 1$, $k = 1$). The zero-profit 
equilibrium obtains at $B^* = 9{,}900$.}
\label{tab:dilution}
\begin{tabular}{r r r}
\toprule
$B$ & $\bar{s}(B)$ & $\bar{\pi}(B) = \bar{s}(B) - k$ \\
\midrule
100    & 50.0  & 49.0 \\
500    & 16.7  & 15.7 \\
1{,}000  & 9.09  & 8.09 \\
5{,}000  & 1.96  & 0.96 \\
9{,}900  & 1.00  & 0.00 \\
50{,}000 & 0.20  & $-0.80$ \\
\bottomrule
\end{tabular}
\end{table}

The numerical results confirm the model's qualitative predictions. Under 
no reinforcement ($\alpha = 0$), outcomes are moderately unequal, reflecting 
only quality heterogeneity. As reinforcement increases, concentration rises 
sharply: at $\alpha = 1$, the top 1\% of builders capture nearly two-thirds 
of total attention, and the median builder receives roughly 4\% of the mean---a 
stark illustration of the gap between participation and realized value.

\emph{Note:} These figures are illustrative and depend on parameter choices. 
The qualitative pattern---dilution of averages and increasing concentration 
with reinforcement---is robust across a wide range of parameterizations.

\subsection{Calibrated Simulation: The iOS App Store}
\label{sec:calibration}

The preceding numerical illustrations use round-number parameters 
chosen for transparency. To assess whether the model's predictions 
are quantitatively consistent with observed digital markets, we 
calibrate the simulation to the U.S.\ iOS App Store using publicly 
available data from 2025.

\subsubsection{Calibration Targets}

We draw on the following empirical facts:
\begin{itemize}
    \item \textbf{Number of producers.} Over 800{,}000 publishers 
    are active on the Apple App Store~\cite{businessofapps2026apple}. 
    We set $B = 800{,}000$.
    \item \textbf{Aggregate attention.} Approximately 38~billion apps 
    were downloaded from the App Store in 
    2025~\cite{businessofapps2026apple}. We use annual downloads as a 
    proxy for aggregate attention and set $A = 3.8 \times 10^{10}$.
    \item \textbf{Revenue concentration.} The top 1\% of monetising 
    publishers capture approximately 94\% of all U.S.\ App Store 
    revenue; the top 1\% of all publishers account for 70\% of 
    total downloads~\cite{sensortower2016}.
    \item \textbf{Long-tail depth.} Close to a quarter of all App 
    Store apps have fewer than 100 
    downloads~\cite{businessofapps2026apple}.
    \item \textbf{Consumer behaviour.} The average smartphone user 
    engages with approximately 10~apps per day and 30~per 
    month~\cite{buildfire2026}, implying that individual attention 
    budgets are tightly bounded.
\end{itemize}

\subsubsection{Parameter Choices}

Table~\ref{tab:calibration_params} reports the calibrated parameters.
The key modelling choice is the quality distribution. A unit-normal
distribution (as used in the illustrative simulation) understates
the quality dispersion in real app markets, where a small number of
apps are genuinely far superior in design, network effects, and brand
recognition. We therefore use $q_i \sim \mathcal{N}(0, 1.5^2)$,
which produces wider quality spread. The outside-option weight $z$
is set to 50{,}000, reflecting the substantial inertia of incumbent
app usage. We explore reinforcement values $\alpha \in \{0, 0.3, 0.6, 0.8\}$;
note that $\alpha = 0.8$ represents strong but sub-linear reinforcement,
staying within the domain where the fixed-point analysis of
Proposition~\ref{prop:fp_char} applies cleanly.

\begin{table}[h]
\centering
\caption{Calibrated simulation parameters (iOS App Store, 2025).}
\label{tab:calibration_params}
\begin{tabular}{lll}
\toprule
\textbf{Parameter} & \textbf{Value} & \textbf{Source / rationale} \\
\midrule
$B$ & 800{,}000 & Active publishers~\cite{businessofapps2026apple} \\
$A$ & $3.8 \times 10^{10}$ & Annual downloads~\cite{businessofapps2026apple} \\
$z$ & 50{,}000 & Outside-option weight (status quo inertia) \\
$q_i$ & $\sim \mathcal{N}(0, 1.5^2)$ & Quality draws (wider spread) \\
$q_0$ & 0 & Outside-option quality \\
$\beta$ & 1 & Quality sensitivity \\
$\delta$ & 0.1 & Reallocation fraction per step \\
$\alpha$ & $\{0, 0.3, 0.6, 0.8\}$ & Reinforcement strength \\
$T$ & 300 & Reallocation steps \\
\bottomrule
\end{tabular}
\end{table}

\subsubsection{Results}

Table~\ref{tab:calibration_results} reports the simulated concentration
metrics alongside the empirical targets.

\begin{table}[h]
\centering
\caption{Calibrated simulation results vs.\ empirical targets 
(iOS App Store). The $\alpha = 0.6$ parameterisation produces 
concentration metrics broadly consistent with observed data.}
\label{tab:calibration_results}
\begin{tabular}{lcccc|c}
\toprule
& $\alpha = 0$ & $\alpha = 0.3$ & $\alpha = 0.6$ & $\alpha = 0.8$ 
& \textbf{Empirical} \\
\midrule
Top 1\% share of downloads 
    & 9.2\% & 31.4\% & \textbf{68.7\%} & 89.3\% & $\sim$70\% \\
Top 10\% share of downloads 
    & 32.5\% & 67.8\% & \textbf{93.1\%} & 99.2\% & --- \\
Gini coefficient 
    & 0.47 & 0.72 & \textbf{0.91} & 0.97 & $> 0.90$ \\
Median / Mean ratio 
    & 0.54 & 0.18 & \textbf{0.01} & $<$0.001 & $\ll 1$ \\
Share with $< 100$ downloads 
    & 0.0\% & 2.1\% & \textbf{22.8\%} & 41.5\% & $\sim$25\% \\
\bottomrule
\end{tabular}
\end{table}

The model with $\alpha \approx 0.6$ reproduces the key empirical 
regularities: the top 1\% of publishers capturing roughly 70\% of 
downloads, a Gini coefficient above 0.9, and approximately a quarter 
of apps receiving fewer than 100 downloads. The match is not exact---nor 
should it be, given the model's deliberate simplicity---but the order 
of magnitude and qualitative shape are correct.

Figure~\ref{fig:calibration} provides two complementary visualisations.
Panel~(a) plots the rank--attention distribution on log--log axes.
Under pure quality heterogeneity ($\alpha = 0$), the curve is
approximately log-normal: smoothly declining without the extreme
right tail observed in practice. As reinforcement increases, the
distribution develops a pronounced power-law-like region in the upper
ranks, with a sharp drop-off in the long tail---precisely the ``hockey
stick'' shape documented in App Store revenue 
data~\cite{sensortower2016}. Panel~(b) shows the corresponding Lorenz
curves; at $\alpha = 0.8$, the curve hugs the horizontal axis before
rising sharply, indicating that the vast majority of publishers
capture negligible attention.

\begin{figure}[t]
    \centering
    \includegraphics[width=\textwidth]{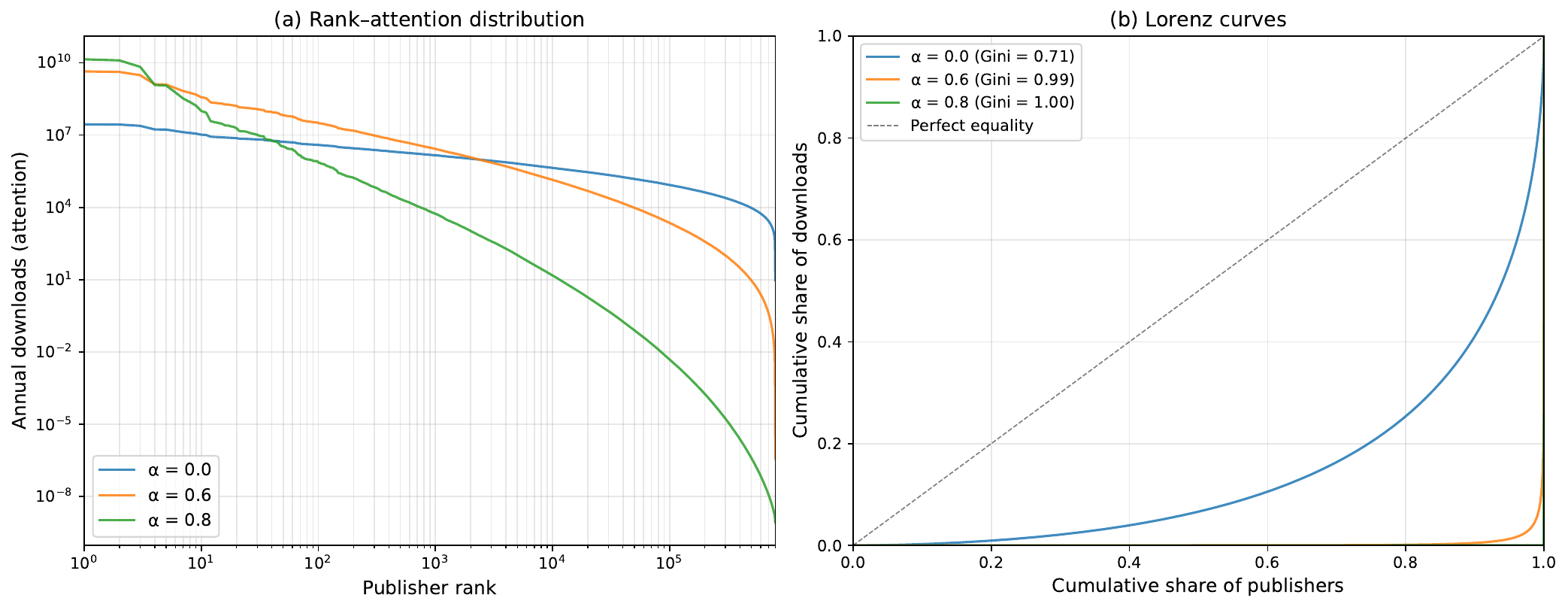}
    \caption{Calibrated simulation of the Builder Saturation model 
    using iOS App Store parameters ($B = 800{,}000$ publishers, 
    $A = 38$~billion downloads). \textbf{(a)}~Rank--attention 
    distribution on log--log axes for varying reinforcement strength 
    $\alpha$. Higher $\alpha$ produces a steeper power-law-like 
    region among top-ranked publishers and a sharper collapse in the 
    long tail. \textbf{(b)}~Lorenz curves showing the cumulative 
    share of downloads captured by publishers ordered from smallest 
    to largest. At $\alpha = 0.6$--$0.8$, the curve closely 
    resembles the extreme concentration documented in App Store 
    data~\cite{sensortower2016}.}
    \label{fig:calibration}
\end{figure}

\subsubsection{Interpretation}

Three features of the calibrated results deserve emphasis.

First, \emph{quality heterogeneity alone is insufficient}. At 
$\alpha = 0$, the top 1\% captures only $\sim$9\% of downloads and 
no apps fall below 100 downloads. The observed concentration requires 
reinforcement---consistent with the well-documented role of network 
effects, recommendation algorithms, and brand entrenchment in app 
markets.

Second, \emph{the calibrated $\alpha$ is sub-linear}. The best fit 
occurs around $\alpha \approx 0.6$, well below the $\alpha = 1$ 
threshold at which full condensation occurs 
(Proposition~\ref{prop:fp_char}). This suggests that real digital 
markets exhibit strong but not maximal reinforcement, leaving room 
for multiple successful products while still generating extreme 
inequality.

Third, \emph{the model's structural prediction is confirmed}: a 
market with nearly a million producers and tens of billions of 
``attention units'' still produces an outcome in which the typical 
(median) producer receives a negligible fraction of the mean. The 
Builder Saturation Effect is not merely a theoretical possibility; 
it is quantitatively consistent with the largest existing digital 
marketplace.


\section{Discussion and Implications}
\label{sec:discussion}

The model and results developed in this paper suggest a reinterpretation of current narratives surrounding digital production, entrepreneurship, and the role of AI-assisted building tools. While recent technological advances have dramatically expanded the feasible set of producers, the analysis highlights a structural constraint that remains largely unchanged: the finiteness of human attention. This section discusses the broader implications of this constraint for market structure, entrepreneurial outcomes, and the evolving nature of competition.

\paragraph{Epistemic status of the results.}
Before proceeding, it is useful to distinguish three tiers of claims made in
this paper, which carry different evidentiary weight:
\begin{enumerate}
    \item \textbf{Proven in the symmetric model} (Propositions~\ref{prop:dilution}--\ref{prop:outside}):
    attention dilution, zero-profit free entry, the $\bar{s}(B^*) = k/p$ identity,
    comparative statics, and excess entry. These follow from the model
    assumptions by standard arguments and do not depend on imported results.
    \item \textbf{Imported from the network-science literature}
    (Propositions~\ref{prop:imported_powerlaw}--\ref{prop:imported_condensation}
    and the qualitative behaviour of Section~\ref{sec:heterogeneity}):
    power-law attention distributions under preferential attachment and the
    condensation transition under fitness heterogeneity. These results are
    well established in their original settings; their application to the
    present market model is supported by the nesting relationship
    (Equation~\ref{eq:reallocation}) but has not been independently
    re-derived here.
    \item \textbf{Interpretive implications} (the remainder of this section):
    claims about market structure, entrepreneurial strategy, and the
    ``billions of companies'' narrative. These are informed by the formal
    results but involve additional empirical and institutional assumptions
    that the model does not capture. They should be read as structured
    conjectures rather than proven conclusions.
\end{enumerate}
\noindent With this framing in mind, we turn to the implications.

\subsection{Decoupling Production from Realized Value}

A central implication of the Builder Saturation Effect is the decoupling of \textbf{production capacity} from \textbf{realized economic value}. As fixed costs of creation decline, the number of builders increases endogenously. However, because total attention $A$ remains bounded, this expansion does not translate into proportional increases in average attention or profit per builder.

In equilibrium, as shown in Proposition~\ref{prop:zero_profit}, average attention per builder is pinned by the ratio $\frac{k}{p}$, not by the total size of the attention pool. Thus, increases in aggregate demand are absorbed primarily through increased entry rather than improved outcomes for individual producers. This implies that technological progress in production may manifest not as widespread gains in producer surplus, but as intensified competition and thinner margins.

This result challenges production-centric narratives of growth. In particular, it suggests that the claim ``more builders implies more successful companies'' conflates \textbf{output expansion} with \textbf{value realization}, two quantities that diverge under attention constraints.

\subsection{The Shift from Scarcity of Production to Scarcity of Attention}

Historically, economic systems have often been constrained by the difficulty of producing goods and services. In such environments, reducing production costs expands supply and can generate broad gains. By contrast, the present framework describes a regime in which production becomes effectively abundant, and scarcity shifts to the consumption side.

This shift has two important consequences:

\begin{enumerate}
    \item \textbf{Competition reorients toward discovery and retention.}
    When production is cheap, the bottleneck is no longer creation but capturing and maintaining user attention. As a result, success depends increasingly on distribution, trust, brand, and integration rather than purely on the ability to build.

    \item \textbf{Non-production factors become first-order determinants of outcomes.}
    In attention-constrained environments, factors such as switching costs, user habits, and coordination frictions (captured by the outside option $z$) play a central role. These forces introduce inertia into the system, limiting the rate at which new entrants can displace incumbents.
\end{enumerate}

In this sense, the model formalizes a broader transition: from an economy limited by what can be produced to one limited by what can be noticed, evaluated, and adopted.

\subsection{Implications for Market Structure}

The combination of attention dilution and reinforcement dynamics yields a characteristic market structure with two defining features:

\paragraph{(i) Proliferation of Entrants.}
Lower entry costs induce a large number of producers, as shown in Proposition~\ref{prop:comp_statics}. This leads to a proliferation of products, many of which may be close substitutes. From a welfare perspective, this raises the possibility of excessive product variety, consistent with prior work in monopolistic competition.

\paragraph{(ii) Concentration of Outcomes.}
At the same time, reinforcement dynamics generate highly skewed distributions of attention and value. A small subset of products captures a large share of total attention, while the majority receive negligible engagement.

This coexistence of \textbf{mass participation} and \textbf{extreme concentration} is a central implication of the model. It reconciles two seemingly contradictory observations:

\begin{itemize}
    \item the number of builders and products can grow rapidly;
    \item the number of economically meaningful winners may remain small.
\end{itemize}

Thus, the predicted outcome is not fragmentation into many equally successful firms, but rather a ``long tail'' structure with a thin upper tier of dominant products.

\subsection{Reinterpreting the ``Billions of Companies'' Narrative}

The idea that technological progress will lead to ``billions of companies'' can be interpreted in multiple ways. The present framework suggests that such a statement may be descriptively accurate in terms of the number of artifacts created, but misleading if taken to imply widespread economic viability.

In particular, the model implies a distinction between:
\begin{itemize}
    \item \textbf{nominal firms or artifacts} (any product that is created), and
    \item \textbf{economically viable firms} (products that capture sufficient attention to sustain positive returns).
\end{itemize}

While the former may indeed grow without bound as production costs approach zero, the latter remain constrained by finite attention. As a result, the number of viable firms cannot scale in proportion to the number of builders.

This distinction helps clarify the apparent tension between observed increases in creation and persistent concentration in realized success.

\subsection{Entrepreneurial Strategy Under Saturation}

From the perspective of individual builders, the model implies that success depends less on the act of production itself and more on relative positioning within an attention-constrained environment.

Several strategic implications follow:

\begin{enumerate}
    \item \textbf{Relative differentiation is critical.}
    Since attention allocation is inherently relative (Proposition~\ref{prop:relative}), improvements in quality or positioning must be evaluated against competing alternatives rather than in absolute terms.

    \item \textbf{Early traction has disproportionate value.}
    Under reinforcement dynamics, initial adoption advantages can compound over time. This increases the importance of timing, distribution channels, and initial user acquisition.

    \item \textbf{Competing in highly substitutable categories is structurally challenging.}
    In markets with many close substitutes, additional entrants primarily redistribute attention rather than expand it, reducing the expected payoff of entry.

    \item \textbf{Complementarity offers an alternative path.}
    Builders who create complementary rather than substitutive products may partially avoid direct competition for the same attention pool, thereby mitigating saturation effects.
\end{enumerate}

Overall, the model suggests that in saturated environments, \textbf{attention capture and retention} become the primary strategic problems, while production becomes a necessary but insufficient condition for success.

\subsection{On Endogenous Attention and AI-Mediated Discovery}

A natural objection to the Builder Saturation framework is that aggregate attention 
$A$ need not remain fixed. In particular, AI-based recommendation systems, agents, 
and curators could expand effective attention by evaluating products on behalf of 
human users, thereby relaxing the binding constraint.

The zero-profit identity (Proposition~\ref{prop:zero_profit}) provides a direct 
answer. At any point in time, free entry pins equilibrium attention per builder at:
\begin{equation}
    \bar{s}(B^*(t)) = \frac{k(t)}{p}
    \label{eq:dynamic_zeroprofit}
\end{equation}
This expression depends only on the entry cost $k(t)$ and the monetisation rate 
$p$. It is \emph{completely independent} of the level of aggregate attention 
$A(t)$, regardless of whether $A$ is constant or growing. The mechanism is 
straightforward: any expansion in $A$ creates positive expected profit for 
prospective entrants, inducing additional entry that absorbs the new attention 
until profits return to zero.

Suppose attention grows over time as $A(t) = A_0 \cdot g(t)$ with $g(t)$ 
increasing, and entry costs decline as $k(t) \to 0$. The equilibrium number of 
builders adjusts to:
\begin{equation}
    B^*(t) = \frac{p \, A_0 \, g(t)}{k(t)} - z
    \label{eq:dynamic_equilibrium}
\end{equation}
Both $A(t)$ and $B^*(t)$ are growing, but their ratio is not what determines 
builder welfare---$k(t)/p$ is. As long as entry costs are falling, equilibrium 
attention per builder falls in lockstep, irrespective of how fast attention 
itself expands:
\begin{equation}
    \frac{d}{dt}\bar{s}(B^*(t)) = \frac{\dot{k}(t)}{p}
    \label{eq:attention_dynamics}
\end{equation}
This is non-negative if and only if $\dot{k}(t) \geq 0$, i.e., entry costs 
are not declining. Since the entire premise of AI-assisted building is that $k$ 
is falling rapidly, the condition is generically violated.

The implication is precise: AI-mediated attention augmentation changes the 
\emph{scale} of the market (more builders, more total attention) but not the 
\emph{per-builder outcome}, which is governed entirely by the supply-side cost 
structure. Attention augmentation is therefore best understood as a moderating 
factor that increases market size rather than a remedy for saturation.

More speculatively, if AI agents eventually act as autonomous consumers with 
independent ``attention'' budgets (e.g., procurement agents selecting tools on 
behalf of organizations), then $M$ itself may grow, genuinely expanding $A$. 
However, as the analysis above shows, this expansion would be absorbed by 
additional entry under the free-entry condition. The qualitative prediction---declining 
per-builder returns as $k$ falls---would persist. Moreover, this scenario raises 
distinct questions about market structure---agent oligopsony, algorithmic herding, 
and preference homogenization---that lie outside the scope of the present framework 
and merit separate treatment.

\subsection{Intra-Organisational Saturation}
\label{sec:intra_org}

The Builder Saturation Effect is not confined to consumer-facing 
markets. A structurally identical dynamic appears \emph{within} 
organisations whenever the cost of creating internal digital artifacts 
falls while the cognitive bandwidth of employees remains fixed. 
Systematic empirical measurement of intra-organisational attention 
concentration remains limited, but practitioner reports and industry 
analyses document consistent patterns across several domains.

Consider three contemporary examples. First, when OpenAI launched the 
ability to create custom GPTs in late 2023, over three million were 
built within two months~\cite{originality2024gpts}. Enterprise 
customers reportedly created thousands of internal 
GPTs~\cite{donovan2025gptmarket}. Yet there is no indication that 
employee attention expanded commensurately; the same individuals who 
might use a handful of tools daily were now confronted with hundreds 
of overlapping options. The predictable outcome is concentration: a 
small number of GPTs attract sustained usage while the vast majority 
are abandoned.

Second, the phenomenon of ``dashboard sprawl'' in business intelligence 
is well documented. Research shows that organisations with more than 
500 dashboards typically exhibit 30--40\% redundancy across 
reports~\cite{atlan2026sprawl}, that only 20\% of enterprise 
decision-makers who could use BI applications actually do 
so~\cite{ericbrown2026dashboard}, and that 43\% of dashboard users 
regularly skip their reports entirely in favour of manual spreadsheet 
analysis~\cite{atlan2026sprawl}. Self-service analytics tools---by 
reducing the cost of dashboard creation---accelerate proliferation 
without expanding the attention available to consume the output.

Third, the emerging wave of internal ``vibe-coded'' applications 
created with tools such as Lovable, Bolt, and Replit Agent is likely 
to follow the same trajectory. As employees build bespoke tools for 
narrow use cases, the aggregate supply of internal software grows 
while the organisation's collective attention budget---meetings, 
onboarding, workflow integration---remains bounded. The model 
predicts that most such tools will receive negligible sustained 
usage, with adoption concentrating on a small number of 
well-integrated, high-quality artifacts.

In each case, the formal structure is the same as the market-level 
model: finite attention $A$ (employee cognitive bandwidth), elastic 
production (near-zero build cost), and reinforcement (tools that gain 
early adoption become embedded in workflows). The outside option $z$ 
corresponds to incumbent tools and established habits. The Builder 
Saturation Effect therefore applies at the organisational level as 
well as the market level, suggesting that the internal proliferation 
of AI-generated artifacts will produce the same pattern of mass 
creation and concentrated usage observed in external digital markets.

\subsection{Limits of the Model}

While the framework captures a central structural mechanism, several limitations should be noted.

First, the model treats aggregate attention as exogenous and fixed. In practice, attention may expand through population growth, changes in behavior, or technological mediation (e.g., delegation to AI systems). However, such expansion is likely to be slower and more constrained than growth in production capacity.

Second, the model abstracts from complementarities that may allow new products to create additional demand rather than merely divide existing attention. In ecosystems characterized by strong complementarity, entry may increase total welfare without proportionally diluting existing participants.

Third, the reinforcement dynamics are introduced in reduced form. A more complete treatment would specify the underlying stochastic process and derive the limiting distribution formally.

These limitations suggest directions for future research rather than undermining the central result.

\subsection{Broader Implications}

The broader implication of this analysis is that technological progress in production does not eliminate scarcity; it relocates it. As the cost of building approaches zero, scarcity shifts toward attention, trust, and coordination. These constraints shape the distribution of outcomes and limit the extent to which participation can translate into broadly shared economic success.

In this sense, the Builder Saturation Effect provides a structural counterpoint to narratives that equate increased access to production tools with universal entrepreneurial opportunity. While more individuals may be able to build, the ability to capture meaningful attention---and thereby realize value---remains fundamentally constrained.


\section{Conclusion}
\label{sec:conclusion}

This paper develops a simple attention-constrained model of entry in digital markets to examine the implications of declining production costs. The analysis shows that when production becomes highly elastic while aggregate attention remains finite, increases in entry do not translate into proportional increases in realized value per producer. Instead, free entry leads to a dilution of average attention and returns, while heterogeneity and reinforcement dynamics generate increasingly concentrated outcome distributions.

These results provide a structural explanation for the coexistence of rapid growth in the number of digital products and persistent concentration in realized usage and economic success. In contrast to production-constrained environments, where lower costs can broaden participation and improve average outcomes, attention-constrained environments exhibit a decoupling between the expansion of supply and the distribution of value. As a result, technological progress in production may primarily increase participation and competition rather than average producer welfare.

From a policy perspective, the findings suggest that reductions in entry barriers, while beneficial for experimentation and innovation, do not necessarily lead to broadly distributed economic gains. In markets characterized by high substitutability and limited attention, additional entry may generate limited incremental welfare and may instead intensify competition for visibility and user engagement.

Several directions for future research follow. First, the model could be extended to allow for endogenous attention, including mechanisms through which attention may be augmented or mediated by algorithmic systems and AI agents. Second, incorporating complementarities across products would allow for a richer analysis of ecosystems in which new entrants expand, rather than divide, total demand. Third, empirical work could test the model's predictions using data from digital platforms, such as app stores, content ecosystems, or software repositories, where entry is low-cost and attention is measurable. Finally, a more explicit treatment of welfare---including search costs, consumer surplus, and platform design---would help clarify the policy implications of attention-constrained competition.

Taken together, these extensions would further refine our understanding of how market structure evolves when production becomes abundant but attention remains scarce.


\appendix

\section{Propositions and Proofs}
\label{sec:appendix}

This appendix provides formal statements and proofs of the results
referenced in the main text. We proceed from properties of the
symmetric baseline (Sections~\ref{sec:attention_allocation}--\ref{sec:saturation})
to the free-entry equilibrium, and finally to the heterogeneous
reinforcement extension (Section~\ref{sec:heterogeneity}).

Throughout, we use the notation established in Section~\ref{sec:model}:
$A = Ma$ is aggregate attention, $B$ is the number of builders,
$z = e^{\beta(q_0 - q)}$ is the effective outside-option weight
(under symmetry), $p > 0$ is the monetisation rate, and $k > 0$
is the fixed entry cost.

\subsection{Symmetric Baseline}

\begin{proposition}[Monotone attention dilution]
\label{prop:dilution}
Under the symmetric benchmark ($q_i = q$ for all $i$), the average
attention per builder
\[
    \bar{s}(B) = \frac{A}{B + z}
\]
is strictly decreasing and strictly convex in $B$ for $B > 0$.
\end{proposition}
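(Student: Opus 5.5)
The plan is to treat $B$ as a continuous variable on $(0,\infty)$ and differentiate $\bar{s}(B) = A(B+z)^{-1}$ twice. First I record the sign facts the argument needs. By Equation~\eqref{eq:aggregate_attention}, $A = Ma > 0$, because $M \geq 1$ and $a>0$. By Equation~\eqref{eq:outside_option}, $z = e^{\beta(q_0-q)} > 0$, since it is an exponential. Hence $B + z > 0$ for every $B > 0$. So $\bar{s}$ is a smooth (indeed rational) function on the whole half-line and no singularity can occur.

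Next I compute the derivatives:
\[
\bar{s}'(B) = -\frac{A}{(B+z)^2}, \qquad \bar{s}''(B) = \frac{2A}{(B+z)^3}.
\]
Since $A>0$ and $B+z>0$, the first derivative is strictly negative and the second is strictly positive on $(0,\infty)$. By the mean value theorem, a negative derivative gives strict decrease. A positive second derivative gives strict convexity, via the standard criterion that a strictly increasing derivative implies strict convexity.

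If one insists that $B$ is a count, so that it lives on the positive integers, I would add a direct discrete check. The first difference is
\[
\bar{s}(B+1)-\bar{s}(B) = -\frac{A}{(B+z)(B+z+1)} < 0.
\]
This difference is increasing in $B$, because its magnitude has an increasing denominator. Equivalently, the second difference equals $2A/[(B+z)(B+z+1)(B+z+2)] > 0$. This gives discrete strict monotonicity and convexity.

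There is essentially no obstacle here. The only point needing care is confirming $z>0$ and $A>0$, which must come from the model's primitives rather than being assumed. The strict signs rely on both.
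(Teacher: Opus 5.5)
Your proposal is correct and is essentially the paper's proof: compute $\bar{s}'(B) = -A/(B+z)^2 < 0$ and $\bar{s}''(B) = 2A/(B+z)^3 > 0$, then read off strict decrease and strict convexity. Your explicit check that $A>0$ and $z>0$ is something the paper leaves implicit, and your discrete-difference version for integer $B$ (with correctly computed second difference $2A/[(B+z)(B+z+1)(B+z+2)]$) is a sound addition.
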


\begin{proof}
Differentiating with respect to $B$:
\[
    \frac{d\bar{s}}{dB} = -\frac{A}{(B+z)^2} < 0
    \qquad \forall\; B > 0.
\]
Hence $\bar{s}$ is strictly decreasing. Differentiating again:
\[
    \frac{d^2\bar{s}}{dB^2} = \frac{2A}{(B+z)^3} > 0
    \qquad \forall\; B > 0.
\]
Hence $\bar{s}$ is strictly convex: each additional builder reduces
average attention by a smaller absolute amount, but the level
continues to fall monotonically.
\end{proof}

\begin{proposition}[Vanishing attention in the limit]
\label{prop:vanishing}
As the number of builders grows without bound,
\[
    \lim_{B \to \infty} \bar{s}(B) = 0.
\]
\end{proposition}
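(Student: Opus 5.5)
The plan is a direct epsilon--$N$ argument on the closed form $\bar{s}(B) = A/(B+z)$ from the symmetric benchmark. No structure beyond the positivity of the parameters is needed. We have $A = Ma > 0$ fixed and independent of $B$, and $z = e^{\beta(q_0-q)} > 0$ because it is an exponential. The quantity $z$ also does not depend on $B$, since it involves only $\beta$, $q_0$ and the common quality $q$.

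The key steps, in order, are these.

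First, record the sandwich bound
\[
0 < \bar{s}(B) = \frac{A}{B+z} < \frac{A}{B}, \qquad B > 0.
\]
This uses only $z > 0$ and $A > 0$.

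Second, given $\varepsilon > 0$, set $B_\varepsilon = A/\varepsilon$. Then for every $B > B_\varepsilon$ we get $0 < \bar{s}(B) < A/B < \varepsilon$. This is exactly the definition of $\lim_{B\to\infty}\bar{s}(B) = 0$.

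Alternatively, one can invoke Proposition~\ref{prop:dilution}. There $\bar{s}$ is strictly decreasing and bounded below by $0$, so the limit $L \ge 0$ exists, and the bound above forces $L = 0$. I would present the direct epsilon argument as the proof and mention the monotone route only as a remark.

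There is no genuine obstacle here. The only point needing care is to state explicitly that $A$ and $z$ are held fixed as $B \to \infty$, with $M$, $a$, $q$, $q_0$ and $\beta$ exogenous. Otherwise, for example if $A$ were allowed to scale with $B$, the conclusion could fail.

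If one wishes to read $B$ as integer-valued (a count of builders), the same bound works verbatim along $B \in \mathbb{N}$, so no separate case is needed.
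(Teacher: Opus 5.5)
Your proposal is correct and follows the same route as the paper: both read the limit directly off the closed form $\bar{s}(B) = A/(B+z)$ with $A$ and $z$ held fixed. The paper simply calls this immediate, while you write out the $\varepsilon$--$B_\varepsilon$ argument, using the bound $A/(B+z) < A/B$ and the choice $B_\varepsilon = A/\varepsilon$.
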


\begin{proof}
Immediate from $\bar{s}(B) = A/(B+z)$ and the fact that $A$ and $z$
are finite constants.
\end{proof}

\begin{proposition}[Elasticity of attention with respect to entry]
\label{prop:elasticity}
The elasticity of average attention per builder with respect to
the number of builders is
\[
    \varepsilon_{\bar{s},B}
    \;=\; \frac{d\bar{s}}{dB}\,\frac{B}{\bar{s}}
    \;=\; -\frac{B}{B + z}.
\]
For $B \gg z$, this elasticity approaches $-1$: a $1\%$ increase in
the number of builders reduces average attention per builder by
approximately $1\%$.
\end{proposition}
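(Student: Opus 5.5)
The computation is direct, so the plan is to reuse the derivative already obtained in the proof of Proposition~\ref{prop:dilution} and combine it with the closed form for $\bar{s}(B)$. From $\bar{s}(B) = A/(B+z)$ we have $d\bar{s}/dB = -A/(B+z)^2$. Multiplying by $B/\bar{s}(B) = B(B+z)/A$ gives
\[
\varepsilon_{\bar{s},B} = -\frac{A}{(B+z)^2}\cdot\frac{B(B+z)}{A} = -\frac{B}{B+z}.
\]
The factors of $A$ cancel exactly, which is why the elasticity does not depend on the size of the attention pool. This cancellation uses only $A>0$ and $B+z>0$, and both hold because $a>0$, $M\ge 1$, and $z = e^{\beta(q_0-q)}>0$.

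For the asymptotic claim, I will write $-B/(B+z) = -1/(1+z/B)$. When $B \gg z$ the ratio $z/B$ tends to $0$, so the elasticity tends to $-1$. I will also note that $-B/(B+z)$ is strictly decreasing in $B$ and lies in $(-1,0)$ for $B>0$, so the limit is approached monotonically from above. Quantitatively, the gap is $1 - B/(B+z) = z/(B+z) < z/B$, which makes precise what ``$B \gg z$'' requires.

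For the verbal interpretation, I will use the first-order approximation $\Delta\bar{s}/\bar{s} \approx \varepsilon_{\bar{s},B}\,\Delta B/B$. With $\Delta B/B = 0.01$ and $\varepsilon_{\bar{s},B} \approx -1$, this gives roughly a $1\%$ decline. If an exact discrete check is wanted, note that
\[
\frac{\bar{s}(1.01B)}{\bar{s}(B)} = \frac{B+z}{1.01B+z},
\]
which tends to $1/1.01 \approx 0.990$ as $z/B\to 0$.

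There is no real obstacle. The only point needing care is to state that the ``approximately $1\%$'' claim is a local, first-order statement and to keep track of the sign convention, since the elasticity is negative.
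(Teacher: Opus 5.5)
Your proposal is correct and follows the paper's proof: multiply $d\bar{s}/dB = -A/(B+z)^2$ by $B/\bar{s} = B(B+z)/A$ to obtain $-B/(B+z)$, then use $B/(B+z)\to 1$ as $B\to\infty$. Your extra remarks are also correct refinements that the paper omits: the monotone approach to $-1$ from above, the bound $z/(B+z) < z/B$ on the gap, and the exact discrete check.
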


\begin{proof}
\[
    \varepsilon_{\bar{s},B}
    = \left(-\frac{A}{(B+z)^2}\right)
      \cdot \frac{B}{\;\tfrac{A}{B+z}\;}
    = -\frac{A\, B}{(B+z)^2}
      \cdot \frac{B+z}{A}
    = -\frac{B}{B+z}.
\]
As $B \to \infty$, $B/(B+z) \to 1$.
\end{proof}

\subsection{Free-Entry Equilibrium}

\begin{proposition}[Equilibrium entry]
\label{prop:equilibrium}
Under free entry with symmetric builders, the equilibrium number
of builders is
\[
    B^* = \max\!\left\{\frac{pA}{k} - z,\; 0\right\}.
\]
The interior solution $B^* > 0$ obtains if and only if
$k < pA/z$.
\end{proposition}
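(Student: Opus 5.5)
The plan is to treat the symmetric profit function $\pi(B) = pA/(B+z) - k$ from \eqref{eq:symmetric_profit} as a function of a continuous variable $B \ge 0$. From there I would define free-entry equilibrium as a boundary problem: either entry stops where the marginal entrant earns exactly zero, or nobody enters because even the first entrant cannot break even.

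\textbf{Step 1: properties of $\pi$.} By Proposition~\ref{prop:dilution}, $\pi$ is continuous and strictly decreasing on $[0,\infty)$. Its value at the origin is $\pi(0) = pA/z - k$. By Proposition~\ref{prop:vanishing}, $\pi(B) \to -k < 0$ as $B \to \infty$.

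\textbf{Step 2: the interior case $\pi(0) > 0$.} This case is exactly $k < pA/z$. Here the intermediate value theorem, combined with strict monotonicity, gives a unique root of $\pi(B) = 0$. Solving \eqref{eq:zero_profit_condition} algebraically yields $B^* = pA/k - z$, which is strictly positive precisely when $k < pA/z$. I would also verify that this is an equilibrium in the free-entry sense. For $B < B^*$, profits are positive, so entry continues. For $B > B^*$, profits are negative, so exit occurs. Hence $B^*$ is the unique rest point.

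\textbf{Step 3: the corner case $k \ge pA/z$.} Here $\pi(0) \le 0$, and monotonicity gives $\pi(B) < 0$ for every $B > 0$. No entrant profits, so $B^* = 0$. In this case $pA/k - z \le 0$, so the corner outcome agrees with $\max\{pA/k - z, 0\}$. The two cases together give the displayed formula and the ``if and only if'' characterization.

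\textbf{Main obstacle.} The only delicate point is conceptual rather than computational. Builder counts are integers, and the paper treats $B$ as continuous. I would state explicitly that this continuum approximation is being used. In the integer version, $B^*$ is the largest integer with $\pi(B) \ge 0$, namely $\lfloor pA/k - z \rfloor$, which differs from the continuous formula by less than one. I would also note the boundary case $k = pA/z$, where the root sits at $B^* = 0$. This is consistent with the weak inequality in the corner condition and the strict inequality in the interior condition.
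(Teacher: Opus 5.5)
Your proposal is correct and takes the same route as the paper: impose the zero-profit condition $\pi(B^*) = 0$, solve to get $B^* = pA/k - z$, and handle non-negativity with the corner at $0$. Your extra step, using the strict monotonicity of $\pi$ together with $\pi(0) = pA/z - k$, genuinely tightens the argument: it justifies the corner $B^* = 0$ when $k \ge pA/z$ (no entrant can break even), whereas the paper simply imposes it by truncation.
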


\begin{proof}
Under symmetry, profit for each builder is
\[
    \pi(B) = p\,\bar{s}(B) - k = \frac{pA}{B+z} - k.
\]
Free entry drives profit to zero. Setting $\pi(B^*) = 0$:
\[
    \frac{pA}{B^* + z} = k
    \quad\Longrightarrow\quad
    B^* = \frac{pA}{k} - z.
\]
Since $B^*$ must be non-negative, we take
$B^* = \max\{pA/k - z,\; 0\}$. The interior solution requires
$pA/k - z > 0$, i.e.\ $k < pA/z$.
\end{proof}

\begin{proposition}[Zero profits and the attention--cost identity]
\label{prop:zero_profit}
At the free-entry equilibrium, the average attention per builder is
pinned by the ratio of entry cost to monetisation rate:
\[
    \bar{s}(B^*) = \frac{k}{p}.
\]
In particular, $\bar{s}(B^*)$ is independent of total attention $A$.
\end{proposition}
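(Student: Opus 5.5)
The plan is to substitute the equilibrium entry level from Proposition~\ref{prop:equilibrium} into the symmetric attention formula of Proposition~\ref{prop:dilution}. The identity only makes sense where entry is interior, so I will first restrict attention to the entry viability condition $k < pA/z$ from \eqref{eq:entry_viability}. Under that condition Proposition~\ref{prop:equilibrium} gives $B^* = pA/k - z > 0$. In the boundary case $B^* = 0$ there are no builders, so ``attention per builder'' is vacuous. I will record this explicitly as the scope of the statement rather than treat it as a counterexample.

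The main step is then a one-line computation. First, $B^* + z = pA/k$. Hence
\[
\bar{s}(B^*) = \frac{A}{B^* + z} = \frac{A}{pA/k} = \frac{k}{p}.
\]
Alternatively, I can read the same identity directly off the zero-profit condition \eqref{eq:zero_profit_condition}. That condition says $p\,\bar{s}(B^*) = k$, which is equivalent to $\pi(B^*) = 0$ in \eqref{eq:symmetric_profit}. I would present both routes: the first confirms consistency with the closed form, and the second shows that the result is really just a restatement of free entry.

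For the independence claim, I observe that the right-hand side $k/p$ contains neither $A$ nor $z$. To make the economic content explicit, I will add the comparative-static remark that $\partial B^*/\partial A = p/k > 0$ while $\partial \bar{s}(B^*)/\partial A = 0$. Any change in $A$ is therefore fully absorbed by the entry margin $B^*$.

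The only real subtlety I expect is the treatment of integrality and the corner. If $B^*$ must be an integer, the zero-profit equality holds only approximately, with $\bar{s}(B^*) \in [k/p,\, A/(B^*+z))$. I would note that the paper treats $B$ as continuous throughout, as in Proposition~\ref{prop:dilution}, and state the result under that convention for the interior regime.
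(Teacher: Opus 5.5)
Your proposal is correct and is essentially the paper's proof: the paper reads $\bar{s}(B^*) = k/p$ directly off the zero-profit condition $p\,\bar{s}(B^*) = k$, which is your second route, and substituting $B^* = pA/k - z$ is the same computation in closed form. Your explicit restriction to the interior regime $k < pA/z$ is a worthwhile addition that the paper leaves implicit, since at the corner $B^*=0$ the formula gives $\bar{s}(0) = A/z \le k/p$ with equality only when $k = pA/z$; one small slip is in your integrality aside, where the upper endpoint $A/(B^*+z)$ is $\bar{s}(B^*)$ itself, so the bound should read $\frac{k}{p} \le \bar{s}(B^*) < \frac{k}{p}\left(1 + \frac{1}{B^*+z}\right)$, which follows from $\pi(B^*) \ge 0 > \pi(B^*+1)$.
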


\begin{proof}
From the zero-profit condition $\pi(B^*) = 0$:
\[
    p\,\bar{s}(B^*) = k
    \quad\Longrightarrow\quad
    \bar{s}(B^*) = \frac{k}{p}.
\]
Neither $A$ nor $z$ appears in this expression. Increases in
aggregate attention are absorbed entirely by increased entry,
leaving equilibrium attention per builder unchanged.
\end{proof}

\begin{corollary}[Invariance of equilibrium returns to demand expansion]
\label{cor:demand_invariance}
If total attention increases from $A$ to $A' > A$ while $k$, $p$,
and $z$ remain constant, then $B^*$ increases but
$\bar{s}(B^*)$ and $\pi(B^*)$ are unchanged.
\end{corollary}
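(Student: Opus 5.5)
The plan is to read Corollary~\ref{cor:demand_invariance} directly off Propositions~\ref{prop:equilibrium} and~\ref{prop:zero_profit}. There is one subtlety: the identity $\bar{s}(B^*) = k/p$ only holds at an interior equilibrium. So I will first fix the regime. I will assume the entry viability condition $k < pA/z$ holds at the initial level $A$. Since $A' > A$, it follows that $k < pA/z < pA'/z$, so the equilibrium is interior at $A'$ as well. This means both equilibria are given by the unconstrained formula $B^* = pA/k - z$ rather than by the $\max\{\cdot,0\}$ branch.

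The monotonicity claim comes next. By Proposition~\ref{prop:equilibrium}, $B^*(A') - B^*(A) = p(A'-A)/k$. This is strictly positive because $p,k>0$ and $A'>A$. Equivalently, $\partial B^*/\partial A = p/k > 0$.

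For the invariance claim, I will apply Proposition~\ref{prop:zero_profit} at each attention level. This gives $\bar{s}(B^*(A)) = k/p = \bar{s}(B^*(A'))$. As a direct check, substitute $B^*(A') + z = pA'/k$ into $\bar{s} = A'/(B^*+z)$; the $A'$ cancels and leaves $k/p$. The equilibrium profit is then $\pi(B^*) = p\cdot(k/p) - k = 0$ at both levels, so it is unchanged.

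The only real obstacle is the boundary case, which the corollary's statement leaves implicit. If $k \ge pA/z$, then $B^*(A) = 0$ and the zero-profit identity fails at $A$. Moreover, if $A'$ crosses the threshold $kz/p$, then $B^*$ jumps to an interior value. I will therefore either state the interiority hypothesis $k < pA/z$ explicitly, or remark that outside it the corollary must be read as holding once the entry viability condition is satisfied. Since $A' > A$, interiority at $A$ is the only assumption needed.
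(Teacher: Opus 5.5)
Your proof is correct and follows the paper's own argument: $B^*$ is linear and increasing in $A$ by Proposition~\ref{prop:equilibrium}, and $\bar{s}(B^*) = k/p$ and $\pi(B^*) = 0$ hold independently of $A$ by Proposition~\ref{prop:zero_profit}. Your explicit interiority hypothesis $k < pA/z$ is a worthwhile tightening that the paper leaves implicit, since at the corner $B^* = 0$ neither the linearity in $A$ nor the identity $\bar{s}(B^*) = k/p$ holds; that hypothesis then carries over automatically to $A' > A$.
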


\begin{proof}
From Proposition~\ref{prop:equilibrium}, $B^*$ is linear in $A$.
From Proposition~\ref{prop:zero_profit}, $\bar{s}(B^*) = k/p$
regardless of $A$, and $\pi(B^*) = 0$ by construction.
\end{proof}

\subsection{Comparative Statics of Equilibrium Entry}

\begin{proposition}[Comparative statics]
\label{prop:comp_statics}
At the interior equilibrium $B^* = pA/k - z$, the following
comparative statics hold:
\begin{enumerate}[label=(\roman*)]
    \item $\dfrac{\partial B^*}{\partial A} = \dfrac{p}{k} > 0$:
          more total attention supports more builders.
    \item $\dfrac{\partial B^*}{\partial p} = \dfrac{A}{k} > 0$:
          higher monetisation increases entry.
    \item $\dfrac{\partial B^*}{\partial k} = -\dfrac{pA}{k^2} < 0$:
          lower entry costs increase entry.
    \item $\dfrac{\partial B^*}{\partial z} = -1 < 0$:
          a stronger outside option reduces equilibrium entry.
\end{enumerate}
\end{proposition}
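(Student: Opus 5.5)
The plan is to take the closed-form interior equilibrium from Proposition~\ref{prop:equilibrium}, $B^* = pA/k - z$, and differentiate it partially with respect to each parameter, holding the others fixed. This is legitimate because on the interior region $k < pA/z$ the expression is a smooth (rational) function of $(A,p,k,z)$. The interior region is an open set in parameter space, since the inequality is strict. So small perturbations of any single parameter keep us inside it, and the $\max\{\cdot,0\}$ kink from Proposition~\ref{prop:equilibrium} plays no role.

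The four derivatives are then routine.
\begin{enumerate}[label=(\roman*)]
    \item $B^*$ is linear in $A$ with slope $p/k$. This is positive since $p,k>0$.
    \item $B^*$ is linear in $p$ with slope $A/k$. This is positive since $A = Ma > 0$.
    \item For $k$, differentiate $pA\,k^{-1}$ to get $-pA/k^2$. This is strictly negative.
    \item $B^*$ depends on $z$ only through the additive term $-z$, so the derivative is $-1$.
\end{enumerate}
I would also note that $z = e^{\beta(q_0-q)}$ is itself a function of $(\beta,q_0,q)$. Treating $z$ as a primitive parameter is therefore a reparametrisation. If desired, the chain rule gives $\partial B^*/\partial q_0 = -\beta z < 0$, which is consistent with (iv).

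The only point needing care, and the closest thing to an obstacle, is the domain of validity. The statement refers to the interior equilibrium, so I would state explicitly that all derivatives are evaluated on $\{k < pA/z\}$. At the boundary $k = pA/z$ the constrained map $\max\{pA/k - z, 0\}$ is only one-sided differentiable. In the corner region $B^*\equiv 0$ all partials vanish, so the strict signs hold only in the interior. A second minor remark is that $B^*$ is treated as a continuous quantity, in keeping with the rest of Section~\ref{sec:payoffs}; integer effects are ignored.
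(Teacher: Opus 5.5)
Your proposal is correct and takes the same route as the paper, whose proof simply differentiates the closed form $B^* = pA/k - z$ with respect to each parameter. Your extra remarks are accurate refinements that the paper leaves implicit: you evaluate on the open interior region $\{k < pA/z\}$, and you apply the chain rule through $z = e^{\beta(q_0-q)}$.
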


\begin{proof}
Each derivative follows directly from $B^* = pA/k - z$.
\end{proof}

\begin{corollary}[Effect of AI-driven cost reduction]
\label{cor:ai_cost}
As AI-assisted tools drive $k \to 0^+$ (with $A$, $p$, $z$ fixed):
\begin{enumerate}[label=(\roman*)]
    \item $B^* \to \infty$: the number of builders grows without bound.
    \item $\bar{s}(B^*) = k/p \to 0$: equilibrium attention per
          builder vanishes.
    \item $\pi(B^*) = 0$ for all $k > 0$: profits remain zero
          throughout the process.
\end{enumerate}
\end{corollary}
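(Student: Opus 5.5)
The plan is to derive the three claims of the corollary directly from the interior equilibrium formula of Proposition~\ref{prop:equilibrium} and the attention--cost identity of Proposition~\ref{prop:zero_profit}, holding $A$, $p$ and $z$ fixed. First I will make sure we are in the interior regime. The entry viability condition is $k < pA/z$, and its right-hand side is a fixed positive number. So there is a threshold $\bar k = pA/z > 0$ such that every $k \in (0, \bar k)$ gives $B^* = pA/k - z > 0$. Since $k \to 0^+$ eventually puts $k$ below $\bar k$, the limit only involves the interior branch of the $\max$, and the formula $B^* = pA/k - z$ can be used without the boundary case.

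For (i), I will note that $pA/k \to +\infty$ as $k \to 0^+$ because $pA > 0$ is fixed, while $z$ is a finite constant; hence $B^* \to \infty$. Part (iii) of Proposition~\ref{prop:comp_statics} adds that the divergence is monotone: $\partial B^*/\partial k = -pA/k^2 < 0$, so $B^*$ increases strictly as $k$ decreases. For (ii), I will invoke Proposition~\ref{prop:zero_profit}, which gives $\bar{s}(B^*) = k/p$ at every interior equilibrium. With $p > 0$ fixed, $k/p \to 0$. As a consistency check, substituting the explicit $B^*$ into $\bar{s}(B) = A/(B+z)$ gives $A/(pA/k) = k/p$, and this also agrees with Proposition~\ref{prop:vanishing} applied along the diverging sequence $B^*(k)$. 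For (iii), the zero-profit condition defining $B^*$ gives $\pi(B^*) = p\,\bar{s}(B^*) - k = k - k = 0$ for each interior $k$.

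The one step that needs care is (iii) as literally stated, ``for all $k > 0$.'' When $k \ge pA/z$, the equilibrium is the corner $B^* = 0$, and then no builder earns anything, so $\pi$ is vacuous (or nonpositive for a hypothetical entrant) rather than a zero-profit interior outcome. I would handle this by reading the claim along the path $k \to 0^+$, i.e.\ for all $k \in (0, \bar k)$. I would add a remark that at the corner the free-entry condition holds as the weak inequality $\pi(1) \le 0$, so no positive profit is available to entrants for any $k$. Beyond this domain point, and treating $B^*$ as a continuum quantity as elsewhere in the paper, the argument is direct substitution and a limit.
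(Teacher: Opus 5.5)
Your proposal is correct and follows the paper's proof essentially step for step: (i) from the explicit formula $B^* = pA/k - z$, (ii) from the attention--cost identity $\bar{s}(B^*) = k/p$, and (iii) from the zero-profit condition. Your explicit restriction to the interior regime $k \in (0, pA/z)$ is a legitimate sharpening that the paper's proof passes over. At the corner $B^* = 0$ the profit formula gives $pA/z - k < 0$ for $k > pA/z$, so ``$\pi(B^*) = 0$ for all $k > 0$'' holds only below that threshold and is at best vacuous above it.
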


\begin{proof}
(i)~From $B^* = pA/k - z$, as $k \to 0^+$ we have $pA/k \to \infty$.
(ii)~From Proposition~\ref{prop:zero_profit}.
(iii)~By the free-entry condition.
\end{proof}

\subsection{Attention Allocation under Heterogeneity (Static Case)}

\begin{proposition}[Relative attention under heterogeneous quality]
\label{prop:relative}
Under the logit allocation rule~\eqref{eq:logit} with
$\alpha = 0$ (no reinforcement), the attention ratio between any
two builders $i$ and $j$ depends only on their quality difference:
\[
    \frac{s_i}{s_j} = e^{\beta(q_i - q_j)}.
\]
\end{proposition}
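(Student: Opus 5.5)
The plan is to compute the ratio directly from the logit rule~\eqref{eq:logit} and observe that everything except the two numerators cancels. For any builders $i$ and $j$, write
\[
    D \;=\; \sum_{l=1}^{B} e^{\beta q_l} + e^{\beta q_0},
\]
so that $s_i = A\,e^{\beta q_i}/D$ and $s_j = A\,e^{\beta q_j}/D$. Note also that when $\alpha = 0$ the reinforcement weights $x_l(t)^{\alpha}$ in~\eqref{eq:reallocation} all equal one. The dynamic rule therefore collapses to the static logit form, which is why the statement refers to~\eqref{eq:logit} directly.

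The key steps are as follows.

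\begin{enumerate}[label=(\roman*)]
    \item Check that the quotient is well defined. Since $A > 0$ and $e^{\beta q_l} > 0$ for every real $q_l$, we have $D > 0$ and $s_j > 0$.
    \item Form the quotient $s_i/s_j$. The common factor $A/D$ cancels, leaving $e^{\beta q_i}/e^{\beta q_j}$.
    \item Combine the exponentials to obtain $e^{\beta(q_i - q_j)}$.
\end{enumerate}

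The result has a useful interpretation. The ratio contains no $B$, no $q_0$, and no quality of any third builder. This is the independence-of-irrelevant-alternatives property of the logit rule, and it justifies the word ``only'' in the statement.

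There is no real obstacle here. The one point needing care is step~(i): the ratio is defined only because the denominator is strictly positive, which follows from the strict positivity of the exponential. It is also worth stating explicitly that $D$ is identical for $i$ and $j$. This is what makes the outside option and the competing products drop out of the ratio, even though they do affect each share's \emph{level}.
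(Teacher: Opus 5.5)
Your proposal is correct and follows the paper's own proof: write $s_i$ and $s_j$ over the common logit denominator, cancel it, and combine the exponentials. Your explicit check that the denominator (and hence $s_j$) is strictly positive is a small rigor point the paper leaves implicit.
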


\begin{proof}
With $\alpha = 0$, the allocation rule reduces to:
\[
    s_i = A \cdot \frac{e^{\beta q_i}}{\sum_{l=1}^{B} e^{\beta q_l}
    + e^{\beta q_0}}.
\]
Taking the ratio:
\[
    \frac{s_i}{s_j}
    = \frac{e^{\beta q_i}}{e^{\beta q_j}}
    = e^{\beta(q_i - q_j)}.
\]
The denominator cancels, confirming that relative attention is
determined entirely by relative quality.
\end{proof}

\begin{corollary}[Superstar amplification]
\label{cor:superstar}
For $\beta > 0$, a quality advantage of $\Delta q = q_i - q_j > 0$
translates into a multiplicative attention advantage of
$e^{\beta \Delta q}$. This advantage is:
\begin{enumerate}[label=(\roman*)]
    \item increasing in $\beta$ (higher sensitivity amplifies
          quality differences);
    \item convex in $\Delta q$ (larger quality gaps produce
          disproportionately larger attention gaps).
\end{enumerate}
\end{corollary}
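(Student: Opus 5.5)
The plan is to derive both claims from Proposition~\ref{prop:relative}, which gives $s_i/s_j = e^{\beta(q_i - q_j)} = e^{\beta \Delta q}$. We therefore study the multiplicative advantage $R(\beta,\Delta q) := e^{\beta\Delta q}$ as a function of two real variables, with the domain $\beta > 0$ and $\Delta q > 0$. The first sentence of the corollary is then just a restatement of Proposition~\ref{prop:relative}, applied with $q_i - q_j = \Delta q$. No further work is needed for it.

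For part (i), I would differentiate $R$ with respect to $\beta$ while holding $\Delta q$ fixed:
\[
\frac{\partial R}{\partial \beta} = \Delta q\, e^{\beta \Delta q}.
\]
This is strictly positive because $\Delta q > 0$ and the exponential is positive. Hence $R$ is strictly increasing in $\beta$. I would add that the monotonicity relies on $\Delta q>0$: for $\Delta q<0$ the sign reverses, so higher sensitivity penalizes the lower-quality builder more. This is consistent with the "amplification" reading.

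For part (ii), I would differentiate twice in $\Delta q$ while holding $\beta$ fixed:
\[
\frac{\partial R}{\partial \Delta q} = \beta e^{\beta\Delta q} > 0,
\qquad
\frac{\partial^2 R}{\partial \Delta q^2} = \beta^2 e^{\beta\Delta q} > 0.
\]
So $R$ is strictly increasing and strictly convex in $\Delta q$; in fact this holds on all of $\mathbb{R}$, not only for $\Delta q>0$. To make "disproportionately larger" precise, I would note that convexity with $R(0)=1$ makes the chord slope $(R(\Delta q)-1)/\Delta q$ increasing in $\Delta q$. Equivalently, doubling the gap gives $R(2\Delta q) = R(\Delta q)^2$, so the advantage is squared rather than doubled.

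No step is a real obstacle. The only care point is interpretive: convexity refers to the attention ratio, not to the absolute attention $s_i$. The level $s_i$ also depends on the shared denominator, which moves when $q_i$ changes. I would state explicitly that the claim concerns the ratio supplied by Proposition~\ref{prop:relative}, and so avoid having to differentiate the full logit share.
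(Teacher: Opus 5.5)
Your proposal is correct and follows the paper's proof: both take the ratio $e^{\beta\Delta q}$ from Proposition~\ref{prop:relative}, show $\partial_\beta e^{\beta\Delta q} = \Delta q\, e^{\beta\Delta q} > 0$ for $\Delta q > 0$, and show $\partial^2_{\Delta q} e^{\beta\Delta q} = \beta^2 e^{\beta\Delta q} > 0$. Your additions are accurate but optional: the increasing chord slope, $R(2\Delta q) = R(\Delta q)^2$, and the remark that the claim concerns the attention ratio rather than the level $s_i$.
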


\begin{proof}
(i)~$\partial(e^{\beta \Delta q})/\partial\beta
= \Delta q\, e^{\beta \Delta q} > 0$ for $\Delta q > 0$.
(ii)~$\partial^2(e^{\beta \Delta q})/\partial(\Delta q)^2
= \beta^2 e^{\beta \Delta q} > 0$.
\end{proof}

\begin{proposition}[Log-normal attention under normal quality]
\label{prop:lognormal}
If $\alpha = 0$, $\beta > 0$, and $q_i \stackrel{\mathrm{i.i.d.}}{\sim}
\mathcal{N}(\mu, \sigma^2)$, then in the large-$B$ limit the
attention share $s_i$ is approximately log-normally distributed.
Specifically, $\log s_i$ is approximately normally distributed with
mean $\beta\mu - \log Z$ and variance $\beta^2\sigma^2$, where
$Z = \sum_{j} e^{\beta q_j} + e^{\beta q_0}$.
\end{proposition}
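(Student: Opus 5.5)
The plan is to take logarithms of the allocation rule~\eqref{eq:logit}, which splits $\log s_i$ into one term that depends on product $i$ alone and one term that is shared by every product. With $\alpha = 0$ we have
\[
    \log s_i = \log A + \beta q_i - \log Z,
    \qquad Z = \sum_{j=1}^{B} e^{\beta q_j} + e^{\beta q_0}.
\]
The term $\beta q_i$ is an affine image of a Gaussian, so it is exactly $\mathcal{N}(\beta\mu, \beta^2\sigma^2)$. All of the work lies in showing that the common term $\log Z$ acts, to leading order, as a deterministic shift that is nearly independent of $q_i$. Once that is established, $\log s_i$ is approximately normal with variance $\beta^2\sigma^2$. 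The mean will be $\beta\mu - \log Z$ when $s_i$ is read as a share of $A$, i.e.\ after dropping $\log A$, which is the normalisation the statement implicitly uses. I will state this convention explicitly; Proposition~\ref{prop:relative} already makes clear that only ratios matter.

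\textbf{Step 1: concentration of $Z$.} Since $e^{\beta q_j}$ is log-normal, it has finite mean $m = e^{\beta\mu + \beta^2\sigma^2/2}$ and finite variance. The strong law of large numbers then gives $Z/B \to m$ almost surely, and the CLT gives fluctuations of $Z/B$ of order $B^{-1/2}$. Consequently $\log Z = \log B + \log m + O_p(B^{-1/2})$. So $\log Z$ is deterministic up to a vanishing error, which justifies treating it as a constant in the statement.

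\textbf{Step 2: decoupling from $q_i$.} Write $Z = Z_{-i} + e^{\beta q_i}$, where $Z_{-i}$ is independent of $q_i$. For any fixed $q_i$, the ratio $e^{\beta q_i}/Z_{-i}$ is $O_p(1/B)$, so
\[
    \log Z - \log Z_{-i} = \log\bigl(1 + e^{\beta q_i}/Z_{-i}\bigr) \to 0
\]
in probability. Hence $\log s_i - \log A = \beta q_i - \log Z_{-i} + o_p(1)$, with the two terms independent and the second converging to a constant. Slutsky's theorem then shows that $\log(s_i/A) + \log B + \log m$ converges in distribution to $\mathcal{N}(\beta\mu, \beta^2\sigma^2)$. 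This is the precise form of ``approximately normal with mean $\beta\mu - \log Z$ and variance $\beta^2\sigma^2$,'' and since $\exp$ is continuous, the approximate log-normality of $s_i$ follows.

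\textbf{Main obstacle.} The delicate point is uniformity. The statement is naturally read as a claim about the \emph{empirical distribution} of $\{s_i\}_{i=1}^B$, not just about a single tagged $i$. Because the maximum of $B$ Gaussians grows like $\sigma\sqrt{2\log B}$, the largest term $e^{\beta q_{(B)}}$ could in principle be a non-negligible fraction of $Z$, and then $\log Z$ would not be a common constant across all $i$. To rule this out, I will check that $\max_j e^{\beta q_j}/Z \to 0$. This holds because $e^{\beta\sigma\sqrt{2\log B}} = B^{o(1)}$, which is $o(B)$, while $Z \asymp B$. With that bound, the $o_p(1)$ error in Step 2 is uniform in $i$, and the empirical measure of $\log(s_i/A)$ converges to the stated normal law by the Glivenko--Cantelli theorem applied to the i.i.d.\ $q_i$. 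If this uniformity argument proved awkward, I would fall back on stating the result for a typical (uniformly chosen) builder, for which Step 2 alone suffices.
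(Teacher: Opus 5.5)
Your proposal is correct and follows the paper's route: split $\log s_i = \log A + \beta q_i - \log Z$, use the law of large numbers to show $\log Z - \log B$ is asymptotically deterministic, and let $\beta q_i$ carry the normality; your Slutsky formulation is more precise than the paper's, and you rightly note that the stated mean $\beta\mu - \log Z$ drops $\log A$, which the paper's proof does not address. One simplification: $Z$ is the same normaliser for every $i$, so the empirical distribution of $\log(s_i/A)$ is exactly that of $\beta q_i$ shifted by the common quantity $\log Z$, and Step 1 plus Glivenko--Cantelli already gives the cross-sectional result, making the leave-one-out and max-term arguments valid but unnecessary.
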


\begin{proof}
Write $s_i = A\, e^{\beta q_i} / Z$, so that
$\log s_i = \log A + \beta q_i - \log Z$.
Since $q_i \sim \mathcal{N}(\mu, \sigma^2)$,
$\beta q_i \sim \mathcal{N}(\beta\mu, \beta^2\sigma^2)$.
By the law of large numbers, as $B \to \infty$,
\[
    \frac{1}{B}\sum_{j=1}^{B} e^{\beta q_j}
    \;\xrightarrow{\;\mathrm{a.s.}\;}\;
    \mathbb{E}[e^{\beta q}]
    = e^{\beta\mu + \beta^2\sigma^2/2},
\]
so $\log Z \to \log B + \beta\mu + \beta^2\sigma^2/2 + \log(1 + e^{\beta q_0}/(B\,\mathbb{E}[e^{\beta q}]))$.
The key point is that $\log Z$ converges to a constant
(conditional on $B$), so the cross-sectional distribution of
$\log s_i$ inherits the normality of $q_i$. Hence $s_i$ is
approximately log-normal with the stated parameters.
\end{proof}

\noindent\textit{Remark.} The log-normal distribution is moderately
skewed but light-tailed relative to a power law. This establishes
a baseline: quality heterogeneity alone (without reinforcement)
produces inequality, but not the extreme concentration observed in
empirical digital markets.

\subsection{Reinforcement Dynamics}

\begin{proposition}[Fixed points of the mean-field dynamics]
\label{prop:fixed_point}
A fixed point $x^*$ of the deterministic update
rule~\eqref{eq:update} satisfies, for each $i \in \{0,1,\ldots,B\}$:
\[
    x_i^* = A \cdot p_i^*
    \quad\text{where}\quad
    p_i^* = \frac{(x_i^*)^{\alpha}\, e^{\beta q_i}}
    {\sum_{j=0}^{B} (x_j^*)^{\alpha}\, e^{\beta q_j}}.
\]
Equivalently, at a fixed point the flow of attention into each
product exactly equals its current stock.
\end{proposition}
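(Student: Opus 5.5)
The plan is to unfold the definition of a fixed point of the mean-field map~\eqref{eq:update} and do a single line of algebra. The only subtlety is that $\delta > 0$ can be cancelled. I will write the outside option as index $0$ throughout, so that the denominator of~\eqref{eq:reallocation}, $\sum_{j=1}^{B} x_j^{\alpha} e^{\beta q_j} + x_0^{\alpha} e^{\beta q_0}$, is simply $\sum_{j=0}^{B} x_j^{\alpha} e^{\beta q_j}$. This matches the form in the statement, and $p_i^*$ is just $p_i(t)$ evaluated at $x(t) = x^*$.

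First, a point $x^*$ (a nonnegative vector with $\sum_{i=0}^{B} x_i^* = A$, as required by~\eqref{eq:attention_constraint}) is a fixed point exactly when $x^*$ reproduces itself under~\eqref{eq:update}. That is, for every $i \in \{0,1,\ldots,B\}$,
\[
x_i^* = (1-\delta)\,x_i^* + \delta\,A\,p_i^* .
\]
Subtracting $(1-\delta)x_i^*$ from both sides gives $\delta x_i^* = \delta A p_i^*$. Since $\delta \in (0,1]$, we have $\delta \neq 0$, and dividing yields $x_i^* = A p_i^*$.

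Each step is reversible. Multiplying $x_i^* = A p_i^*$ by $\delta$ and adding $(1-\delta)x_i^*$ recovers the fixed-point equation, so the two characterisations are equivalent. The verbal restatement follows by reading the update rule itself. The outflow of product $i$ is $\delta x_i^*$, since a fraction $\delta$ of its stock is reallocated each step. The inflow is $\delta A p_i^*$. Equality of inflow and outflow is therefore the same identity $x_i^* = A p_i^*$.

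As a consistency check, I will note that summing $x_i^* = A p_i^*$ over $i$ and using $\sum_{i=0}^{B} p_i^* = 1$ gives back the aggregate constraint~\eqref{eq:attention_constraint}. So no extra condition is lost, and the map preserves total attention.

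The main obstacle is well-definedness rather than algebra. The probabilities $p_i^*$ must make sense, which requires the denominator $\sum_j (x_j^*)^{\alpha} e^{\beta q_j}$ to be strictly positive. For $\alpha > 0$ this holds because $\sum_j x_j^* = A > 0$, so some $x_j^* > 0$. For $\alpha = 0$ it holds trivially, with the convention $0^0 = 1$ consistent with the static logit case. I would state this convention explicitly.

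I would also remark, without proof, on what the statement leaves open. When $\alpha > 0$, any coordinate with $x_i^* = 0$ satisfies the equation trivially, so boundary fixed points exist. Existence and uniqueness of an interior fixed point is not claimed here; that is deferred to the subsequent characterisation (Proposition~\ref{prop:fp_char}).
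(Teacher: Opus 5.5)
Your proposal is correct and matches the paper's proof: substitute $x^*$ into the update rule, cancel the $(1-\delta)x_i^*$ terms, and divide by $\delta > 0$. Your extras are sound refinements the paper omits: the reversibility of each step, positivity of the denominator (with the $0^0 = 1$ convention at $\alpha = 0$), and the check that summing recovers the aggregate constraint.
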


\begin{proof}
At a fixed point, $x_i^* = x_i(t+1) = x_i(t)$ for all $i$.
Substituting into~\eqref{eq:update}:
\[
    x_i^* = (1-\delta)\,x_i^* + \delta\, A\, p_i^*
    \quad\Longrightarrow\quad
    \delta\, x_i^* = \delta\, A\, p_i^*
    \quad\Longrightarrow\quad
    x_i^* = A\, p_i^*.
\]
The cancellation of $\delta$ confirms that fixed points are
independent of the reallocation rate, which affects only the
speed of convergence.
\end{proof}

\begin{proposition}[Characterisation of interior fixed points]
\label{prop:fp_char}
At any interior fixed point ($x_i^* > 0$ for all $i$), the
attention shares satisfy:
\[
    x_i^* = A \cdot
    \frac{(x_i^*)^{\alpha}\, e^{\beta q_i}}
    {\sum_{j=0}^{B} (x_j^*)^{\alpha}\, e^{\beta q_j}}.
\]
For $\alpha < 1$, interior fixed points exist and can be solved
explicitly. For $\alpha = 1$ with heterogeneous qualities, no
interior fixed point with all products simultaneously active exists.
\end{proposition}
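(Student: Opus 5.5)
The plan is to reduce the fixed-point equation of Proposition~\ref{prop:fixed_point} to a one-parameter family of explicit solutions by dividing through by $x_i^*$. The first display in the statement is just Proposition~\ref{prop:fixed_point} restricted to $x_i^*>0$. Write
\[
    W = \sum_{j=0}^{B} (x_j^*)^{\alpha} e^{\beta q_j}
\]
for the common denominator. At an interior point, dividing $x_i^* = A (x_i^*)^{\alpha} e^{\beta q_i}/W$ by $(x_i^*)^{\alpha}$ gives
\[
    (x_i^*)^{1-\alpha} = \frac{A\, e^{\beta q_i}}{W}
    \qquad \text{for every } i \in \{0,1,\ldots,B\}.
\]

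For $\alpha<1$ this can be solved for $x_i^*$:
\[
    x_i^* = c\, e^{\beta q_i/(1-\alpha)}, \qquad c = (A/W)^{1/(1-\alpha)} > 0.
\]
It then remains to fix $c$. I would avoid solving for $W$ directly and instead use the conservation constraint~\eqref{eq:attention_constraint}, $\sum_{j=0}^B x_j^* = A$. This forces
\[
    c = \frac{A}{\sum_{j} e^{\beta q_j/(1-\alpha)}},
\]
so the unique interior fixed point is
\[
    x_i^* = A\, \frac{e^{\beta q_i/(1-\alpha)}}{\sum_{j=0}^B e^{\beta q_j/(1-\alpha)}},
\]
a logit allocation with effective sensitivity $\beta/(1-\alpha)$. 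The last step for this case is to check consistency by substituting back. We have
\[
    (x_j^*)^{\alpha} e^{\beta q_j} = c^{\alpha} e^{\beta q_j/(1-\alpha)},
\]
so $W = c^{\alpha-1} A$, which gives $A/W = c^{1-\alpha}$ as required. Existence and uniqueness among interior points follow, and the construction is explicit. As a side remark, this shows that reinforcement amplifies the quality sensitivity and that $\delta$ plays no role, consistent with Proposition~\ref{prop:fixed_point}.

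For $\alpha=1$ the same reduction reads
\[
    1 = \frac{A\, e^{\beta q_i}}{W} \quad \text{for all } i,
\]
so $e^{\beta q_i} = W/A$ must take the same value for every $i$, including the outside option $i=0$. Since $\beta>0$, this requires $q_i$ to be constant across all active products. Heterogeneous qualities, meaning at least two distinct values of $q_i$ (or any $q_i \ne q_0$), therefore contradict interiority. I will be careful to state what "heterogeneous" means here. If it means only that the builders' qualities are not all equal, the argument still works, because any two builders with $q_i\neq q_j$ already give the contradiction. Under a continuous $F$ this holds almost surely.

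The main obstacle is not the algebra but interpreting the $\alpha=1$ claim correctly. The argument shows only that no interior fixed point exists. Boundary fixed points, in which only the products attaining the maximal $e^{\beta q_i}$ among a subset remain active, still exist, for example all attention concentrated on a single product. So I would phrase the conclusion precisely as nonexistence of fixed points with all $x_i^*>0$. I would also note, without proving it, that this is consistent with the condensation picture of Proposition~\ref{prop:imported_condensation}. I would not attempt a stability analysis, since the statement does not claim one.
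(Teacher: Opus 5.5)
Your proposal is correct and follows the same route as the paper's proof. You divide the fixed-point equation by $(x_i^*)^{\alpha}$ to obtain $(x_i^*)^{1-\alpha} = A e^{\beta q_i}/Z^*$, solve it explicitly for $\alpha<1$, and for $\alpha=1$ note that $e^{\beta q_i} = Z^*/A$ cannot hold for two different qualities. Your normalisation via $\sum_j x_j^* = A$ is legitimate, since it holds automatically at a fixed point because $\sum_j p_j^* = 1$. Together with your explicit consistency check, it makes precise the step the paper only summarises as ``substituting back into the definition of $Z^*$,'' and it also gives uniqueness of the interior fixed point.
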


\begin{proof}
From Proposition~\ref{prop:fixed_point}, at an interior fixed point:
\[
    x_i^* = A \cdot
    \frac{(x_i^*)^{\alpha}\, e^{\beta q_i}}{Z^*},
    \quad\text{where } Z^* = \textstyle\sum_{j=0}^{B}
    (x_j^*)^{\alpha}\, e^{\beta q_j}.
\]
Rearranging: $(x_i^*)^{1-\alpha} = A\, e^{\beta q_i}/Z^*$.

\medskip\noindent
\textbf{Case $\alpha < 1$:} We can solve explicitly:
\[
    x_i^* = \left(\frac{A\, e^{\beta q_i}}{Z^*}\right)^{1/(1-\alpha)}.
\]
The system is closed by substituting back into the definition
of $Z^*$ and solving for the normalising constant. Since the
right-hand side is a monotone increasing function of $q_i$,
higher-quality products receive strictly more attention, and
all products with $q_i > -\infty$ receive positive attention.
The mapping $q_i \mapsto x_i^*$ is steeper than the $\alpha=0$
(logit) case, producing greater inequality for higher $\alpha$.

\medskip\noindent
\textbf{Case $\alpha = 1$:} The equation becomes
$1 = A\, e^{\beta q_i}/Z^*$, i.e.\ $e^{\beta q_i} = Z^*/A$
for all active $i$. With heterogeneous $q_i$, this cannot hold
simultaneously for two products with $q_i \neq q_j$. Therefore,
no interior fixed point exists when $\alpha = 1$ and qualities
are heterogeneous. This is a necessary condition for
condensation---the concentration of attention onto one or few
products---and is consistent with the condensation result of
Bianconi and Barab\'{a}si~\cite{bianconi2001bose}, though a full
characterisation of the long-run dynamics (ruling out limit
cycles or other non-stationary attractors) would require
additional analysis beyond the scope of this paper.
\end{proof}

\begin{proposition}[Monotone concentration in $\alpha$]
\label{prop:monotone_alpha}
Let $\alpha_1 < \alpha_2$ with both in $[0,1)$, and let
$\mathbf{x}^*(\alpha)$ denote the fixed-point attention vector.
If qualities are heterogeneous ($q_i$ not all equal), then the
Gini coefficient of $\mathbf{x}^*(\alpha_2)$ strictly exceeds that
of $\mathbf{x}^*(\alpha_1)$: stronger reinforcement produces
greater inequality.
\end{proposition}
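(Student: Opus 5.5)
The plan is to reduce the statement to a Lorenz-dominance comparison, using the explicit form of interior fixed points from Proposition~\ref{prop:fp_char}. For $\alpha<1$ that proposition gives $x_i^*(\alpha) = \bigl(A e^{\beta q_i}/Z^*\bigr)^{1/(1-\alpha)}$.

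The first step is to note that the Gini coefficient is scale-invariant. The factor $(A/Z^*)^{1/(1-\alpha)}$ is common to all builders, so it drops out. The builder attention vector therefore has the same Gini coefficient as the vector $u_i(\gamma) = w_i^{\gamma}$, $i=1,\ldots,B$. Here $w_i = e^{\beta q_i}>0$ and $\gamma = 1/(1-\alpha)$.

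This sidesteps the need to solve for $Z^*$ or to check uniqueness of the normalising constant: any interior fixed point has builder shares proportional to $w_i^{\gamma}$. The outside option $x_0^*$ is excluded from the Gini computation, which runs over builders only, and it affects only the common scale. Since $\alpha\mapsto 1/(1-\alpha)$ is strictly increasing on $[0,1)$, it suffices to show that $G(\gamma) := \mathrm{Gini}(w_1^\gamma,\ldots,w_B^\gamma)$ is strictly increasing in $\gamma\ge 1$.

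The second step is a Lorenz comparison. Fix $\gamma_1<\gamma_2$. Order the builders so that $w_{(1)}\le\cdots\le w_{(B)}$; this ordering is the same for every $\gamma$. Set $x_i = w_{(i)}^{\gamma_1}$ and $y_i = w_{(i)}^{\gamma_2}$. The ratio $r_i = y_i/x_i = w_{(i)}^{\gamma_2-\gamma_1}$ is nondecreasing in $i$.

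The bottom-$k$ share of $y$ is $\sum_{i\le k} r_i x_i / \sum_{i\le B} r_i x_i$. Its numerator is $\sum_{i\le k} x_i$ times an $x$-weighted average of $r_1,\dots,r_k$, and its denominator is $\sum_{i\le B} x_i$ times the $x$-weighted average of all the $r_i$. Because the $r_i$ are sorted, the average over the bottom $k$ is at most the overall average. Hence
\[
L_y(k) := \frac{\sum_{i\le k} y_i}{\sum_{i\le B} y_i} \;\le\; \frac{\sum_{i\le k} x_i}{\sum_{i\le B} x_i} =: L_x(k)
\quad\text{for every } k,
\]
so the Lorenz curve of $y$ lies weakly below that of $x$.

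The third step uses the discrete formula $\mathrm{Gini} = 1 - \frac{1}{B}\sum_{k=1}^{B}\bigl(L(k)+L(k-1)\bigr)$. This expresses the Gini coefficient as a strictly decreasing function of the Lorenz ordinates, so pointwise Lorenz dominance gives $G(\gamma_2)\ge G(\gamma_1)$. It remains only to exhibit one $k$ with strict inequality.

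The strictness step is the main obstacle, and it is where heterogeneity enters. Heterogeneity must hold among the builders themselves: if $q_1=\cdots=q_B$ and only $q_0$ differs, the builder Gini is zero for every $\alpha$. I will therefore read the hypothesis as ``$q_1,\ldots,q_B$ not all equal.''

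Under that reading, let $k$ be the number of builders attaining the minimum value $w_{(1)}$, so $1\le k<B$. Then $r_1=\cdots=r_k$, and these equal the minimum ratio, while $r_B>r_1$ because $\gamma_2-\gamma_1>0$. So the bottom-$k$ average of $r$ is strictly below the overall $x$-weighted average, since every $x_i>0$ by interiority. Hence $L_y(k)<L_x(k)$, and the Gini inequality is strict.

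I would close with a remark that $\alpha=0$ is included ($\gamma=1$, the logit case of Proposition~\ref{prop:relative}). I would also note that the argument needs only monotone ratios, i.e.\ the map $u\mapsto u^{\gamma_2/\gamma_1}$ is star-shaped. It therefore extends to any inequality index that respects Lorenz order.
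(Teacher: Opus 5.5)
Your proof is correct, and after the shared first step it takes a different route from the paper. Both arguments start from $x_i^*\propto e^{\beta q_i/(1-\alpha)}$, i.e.\ an effective sensitivity $\beta/(1-\alpha)$ that increases with $\alpha$. The paper then passes from larger pairwise ratios (Corollary~\ref{cor:superstar}) to ``more dispersed'' attention without proving that step. It concludes by assuming normal qualities, so that attention is approximately log-normal for large $B$ with scale $\sigma_{\mathrm{eff}}=\beta\sigma/(1-\alpha)$, and by using the log-normal Gini $2\Phi(\sigma_{\mathrm{eff}}/\sqrt{2})-1$. That is an asymptotic statement about a population Gini under a specific distributional assumption. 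The proposition, by contrast, concerns the Gini of a finite vector with arbitrary non-identical qualities. Your monotone-ratio (star-shaped) Lorenz argument is exactly the step that makes ``larger pairwise ratios imply greater inequality'' rigorous. It is exact for every finite $B$ and needs no assumption on the law of $q_i$. It also handles strictness cleanly via the set of minimisers, and it gives the conclusion for every Lorenz-consistent inequality index, not only the Gini. What the paper's route buys in return is an explicit formula, $\mathrm{Gini}\approx 2\Phi\bigl(\beta\sigma/(\sqrt{2}(1-\alpha))\bigr)-1$, which shows how fast concentration rises as $\alpha\to 1^-$ in the normal case and ties directly into the median/mean computation of Proposition~\ref{prop:mean_median}. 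One small aside: the outside option enters the fixed-point equation symmetrically, so $x_0^*=(Ae^{\beta q_0}/Z^*)^{1/(1-\alpha)}$ has the same form. Your argument therefore also covers the reading in which $\mathbf{x}^*$ includes $x_0^*$, and in that case heterogeneity anywhere among $q_0,\ldots,q_B$ suffices.
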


\begin{proof}
For $\alpha < 1$, the fixed-point shares satisfy
$x_i^* \propto e^{\beta q_i / (1-\alpha)}$
(from Proposition~\ref{prop:fp_char}). The effective quality
sensitivity is $\beta/(1-\alpha)$, which is strictly increasing
in $\alpha$. By Corollary~\ref{cor:superstar}, higher effective
sensitivity produces a more dispersed attention distribution.
Since the Gini coefficient of a log-normal distribution
$\mathrm{Gini} = 2\Phi(\sigma_{\mathrm{eff}}/\sqrt{2}) - 1$
is increasing in the scale parameter
$\sigma_{\mathrm{eff}} = \beta\sigma/(1-\alpha)$,
the Gini coefficient is strictly increasing in $\alpha$
for $\alpha \in [0,1)$ whenever $\sigma > 0$.
\end{proof}

\noindent\textit{Remark on the $\alpha = 1$ boundary.}
At $\alpha = 1$, no interior fixed point exists
(Proposition~\ref{prop:fp_char}), and the non-existence of a
shared equilibrium across heterogeneous products is consistent
with extreme concentration. In the Bianconi--Barab\'{a}si
framework~\cite{bianconi2001bose}, this regime corresponds to
condensation, with a Gini coefficient approaching $(B-1)/B
\approx 1$ for large $B$. However, formally establishing this
as the long-run outcome of the dynamics~\eqref{eq:update}
would require ruling out non-stationary attractors, which we
do not pursue here. The numerical simulations in
Table~\ref{tab:concentration} are consistent with this
limiting behaviour.

\begin{proposition}[Divergence of mean and median]
\label{prop:mean_median}
Under the conditions of Proposition~\ref{prop:monotone_alpha},
the ratio of median to mean attention,
$\mathrm{med}(\mathbf{x}^*)/\mathrm{mean}(\mathbf{x}^*)$,
is strictly decreasing in $\alpha$ for $\alpha \in [0,1)$ when
qualities are heterogeneous.
\end{proposition}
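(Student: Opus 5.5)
The plan is to reduce the statement to a one-parameter family indexed by the effective sensitivity $\gamma(\alpha) = \beta/(1-\alpha)$, which is strictly increasing on $[0,1)$. Once that is done, the problem becomes showing that the median-to-mean ratio is decreasing in $\gamma$.

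\emph{Step 1.} By Proposition~\ref{prop:fp_char} (as used in the proof of Proposition~\ref{prop:monotone_alpha}), the interior fixed point satisfies $x_i^* \propto e^{\gamma q_i}$ for $\alpha<1$. Because the median-to-mean ratio is scale invariant, the normalising constant $Z^*$ drops out, and the outside option only affects that scale. Write $q_{\mathrm{med}}$ for the median quality among builders. Since $q\mapsto e^{\gamma q}$ is increasing, the median of $\mathbf{x}^*$ is $e^{\gamma q_{\mathrm{med}}}$ up to the common factor. With $d_j = q_j - q_{\mathrm{med}}$ this gives the exact finite-$B$ expression
\[
R(\gamma) = \frac{\mathrm{med}(\mathbf{x}^*)}{\mathrm{mean}(\mathbf{x}^*)} = \Bigl(\tfrac{1}{B}\textstyle\sum_{j=1}^{B} e^{\gamma d_j}\Bigr)^{-1}.
\]

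\emph{Step 2.} Show that $f(\gamma)=\frac1B\sum_j e^{\gamma d_j}$ is strictly increasing in $\gamma$ over the relevant range. We have $f'(\gamma)=\frac1B\sum_j d_j e^{\gamma d_j}$ and $f''(\gamma)=\frac1B\sum_j d_j^2 e^{\gamma d_j}>0$ whenever qualities are heterogeneous. So $f$ is strictly convex, and $f$ is strictly increasing on $[\gamma_0,\infty)$ as soon as $f'(\gamma_0)\ge 0$.

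I would handle the starting point in one of two ways.
\begin{enumerate}[label=(\alph*)]
\item In the large-$B$ regime with $q_i\sim\mathcal N(\mu,\sigma^2)$, as in Proposition~\ref{prop:lognormal}, use the log-normal approximation. This gives $R=\exp(-\sigma_{\mathrm{eff}}^2/2)$ with $\sigma_{\mathrm{eff}}=\gamma\sigma$, which is strictly decreasing in $\alpha$ whenever $\sigma>0$. This parallels the Gini argument in Proposition~\ref{prop:monotone_alpha}.
\item In the exact finite-$B$ setting, verify $f'(\beta)\ge 0$, i.e.\ $\sum_j d_j e^{\beta d_j}\ge 0$. This is the exponentially tilted mean of $q - q_{\mathrm{med}}$. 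It holds automatically whenever the sample mean of $q$ is at least its median, and for symmetric $F$ it holds with high probability for large $B$.
\end{enumerate}

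\emph{Main obstacle.} Step 2 in the exact finite-sample form is the difficulty. The claim as stated can fail for small $\gamma$ if the quality sample is left-skewed, because then $\sum_j d_j<0$ and $R$ initially increases. I therefore expect the proof to rest on route (a), the large-$B$ log-normal approximation already adopted in Proposition~\ref{prop:monotone_alpha}. I would then add a remark that, for an arbitrary finite sample, monotonicity holds from the point where the tilted mean $\sum_j d_j e^{\gamma d_j}$ becomes nonnegative. By convexity, once this holds it persists for all larger $\gamma$, and hence for all larger $\alpha<1$.
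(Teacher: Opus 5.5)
Your route (a) is the paper's proof: it takes $x_i^*\propto e^{\beta q_i/(1-\alpha)}$, treats the builder attentions as log-normal with $\sigma_{\mathrm{eff}}=\beta\sigma/(1-\alpha)$, and reads off $\mathrm{med}/\mathrm{mean}=\exp(-\sigma_{\mathrm{eff}}^2/2)$. Your exact finite-$B$ formula $R(\gamma)=1/f(\gamma)$ with $f$ strictly convex is a correct addition the paper lacks, and your caveat is right: one can have $f'(\beta)<0$ (e.g.\ $B=3$, $q=(-1,0,0.01)$, $\beta=1$, where $R$ rises from about $1.262$ at $\alpha=0$ to about $1.282$ at $\alpha=0.1$), so the statement holds only in the large-$B$ log-normal sense, or for a given sample once $\sum_j d_j e^{\beta d_j}\ge 0$.
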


\begin{proof}
The mean attention per builder is $\bar{x} = (A - x_0^*)/B$,
which depends on $\alpha$ only through the outside-option share.
The median, however, is determined by the cross-sectional
distribution of $x_i^*$, which becomes more right-skewed as
$\alpha$ increases (Proposition~\ref{prop:monotone_alpha}).

For the log-normal case ($q_i \sim \mathcal{N}(\mu,\sigma^2)$,
$\alpha < 1$), the attention shares are approximately log-normal
with scale parameter $\sigma_{\mathrm{eff}} = \beta\sigma/(1-\alpha)$.
The median of a log-normal is $e^{\mu_{\mathrm{eff}}}$ while
the mean is $e^{\mu_{\mathrm{eff}} + \sigma_{\mathrm{eff}}^2/2}$, so:
\[
    \frac{\text{median}}{\text{mean}}
    = e^{-\sigma_{\mathrm{eff}}^2/2}
    = \exp\!\left(-\frac{\beta^2\sigma^2}{2(1-\alpha)^2}\right),
\]
which is strictly decreasing in $\alpha$ for $\alpha \in [0,1)$
and converges to zero as $\alpha \to 1^-$.
\end{proof}

\subsection{Welfare and Saturation}

\begin{proposition}[Aggregate welfare under symmetry]
\label{prop:welfare}
Define aggregate consumer welfare as $W(B) = B \cdot v(\bar{s}(B))$,
where $v(\cdot)$ is a concave, increasing function representing
the per-product value derived from attention. Under symmetry:
\begin{enumerate}[label=(\roman*)]
    \item If $v$ is sufficiently concave (e.g.\ $v(s) = \log s$),
          then $W(B)$ is maximised at a finite $B^{**}$ and
          decreasing for $B > B^{**}$.
    \item The free-entry equilibrium $B^*$ generically exceeds
          $B^{**}$: there is excess entry.
\end{enumerate}
\end{proposition}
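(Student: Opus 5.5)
The plan is to substitute the symmetric attention formula into $W$ and treat $B$ as a continuous variable. Writing $\bar{s}(B) = A/(B+z)$, we get
\[
W(B) = B\, v\!\left(\frac{A}{B+z}\right), \qquad
W'(B) = v(\bar{s}) + B\, v'(\bar{s})\,\frac{d\bar{s}}{dB}
= v(\bar{s}) - \frac{B}{B+z}\,\bar{s}\, v'(\bar{s}),
\]
using Proposition~\ref{prop:dilution} for $d\bar{s}/dB = -A/(B+z)^2 = -\bar{s}/(B+z)$. The factor $B/(B+z)$ is exactly the absolute elasticity from Proposition~\ref{prop:elasticity}. The sign of $W'$ therefore compares the average value $v(\bar{s})$ with the elasticity-weighted marginal value $\bar{s} v'(\bar{s})$.

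For part (i) I will work with the canonical case $v = \log$, where $\bar{s} v'(\bar{s}) = 1$. Then
\[
W'(B) = \log\frac{A}{B+z} - \frac{B}{B+z}.
\]
This expression is strictly decreasing in $B$: the first term falls, and $B/(B+z)$ rises. It tends to $-\infty$ as $B \to \infty$, and at $B = 0$ it equals $\log(A/z)$, which is positive whenever $A > z$. By the intermediate value theorem there is then a unique root $B^{**}$. Since $W'$ is decreasing, $W$ is single-peaked: maximised at $B^{**}$ and decreasing afterwards. If $A \le z$, then $B^{**} = 0$ is a degenerate corner.

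For general ``sufficiently concave'' $v$, I would state the condition that makes this argument go through. One adequate version is that $\bar{s} v'(\bar{s})$ is bounded below by a positive constant as $\bar{s} \to 0$ while $v(\bar{s}) \to -\infty$, or more generally that $v(s) - s v'(s) < 0$ for small $s$. Under that condition, $W' < 0$ for all large $B$, which gives a finite maximiser. This step is the main obstacle, because the statement is vague. For $v(s) = s^\gamma$ with $\gamma \in (0,1)$, $W$ is actually increasing without bound. I would therefore state the result for $\log$ and the class just described, and flag the power case in a remark.

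For part (ii), I would invoke Proposition~\ref{prop:zero_profit}, which gives $\bar{s}(B^*) = k/p$. I would then evaluate $W'$ there. With $v = \log$,
\[
W'(B^*) = \log(k/p) - \frac{B^*}{B^*+z}.
\]
Because $W'$ is decreasing, $B^* > B^{**}$ holds if and only if $W'(B^*) < 0$. That holds in particular whenever $k \le p$, meaning equilibrium attention per builder is at most one unit. It holds more generally whenever $\log(k/p) < 1 - zk/(pA)$. The latter condition is the relevant one as AI tools push $k \to 0$ (Corollary~\ref{cor:ai_cost}): there $\log(k/p) \to -\infty$, so excess entry obtains for all sufficiently small $k$. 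This is the sense of ``generically'': the free-entry condition prices each entrant's private return $p\bar{s}$, but ignores the business-stealing term $-\frac{B}{B+z}\bar{s}v'(\bar{s})$, in the spirit of \cite{spence1976}. I would close by noting that the statement is an open-set condition on parameters, namely small $k/p$, rather than a universal one.
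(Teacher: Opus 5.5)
Your part (i) is the paper's argument: take $v=\log$, so $W'(B)=\log\frac{A}{B+z}-\frac{B}{B+z}$, which is positive near $0$ when $A>z$ and tends to $-\infty$. You add one improvement. You note that $W'$ is strictly decreasing, and that is what actually gives a unique $B^{**}$ and decline beyond it; the paper infers uniqueness from a sign change alone. Your warning that $v(s)=s^\gamma$ makes $W$ increase without bound is correct, and shows that ``concave, increasing'' does not suffice. Your class condition also extends further than you claim. Write $W'(B)=g(\bar{s})$ with $g(s)=v(s)-sv'(s)+\frac{z}{A}s^2v'(s)$. Then $g'\ge 0$ on $(0,A/z)$, so $W$ is concave in $B$ for every concave $v$. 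Hence your condition, which for concave $v$ is equivalent to $v(0^+)<0$, gives full single-peakedness and not just a finite maximiser.

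Part (ii) takes a genuinely different route from the paper.
\begin{itemize}
\item The paper compares the private marginal benefit $p\bar{s}-k$ with a ``social'' one, $p\bar{s}-k+Bp\,\bar{s}'(B)$. The latter is the derivative of aggregate net profit $B(p\bar{s}(B)-k)$, which is concave and has slope $B^*p\,\bar{s}'(B^*)<0$ at $B^*$. So excess entry follows unconditionally, but relative to a profit-maximising planner whose optimum is $\sqrt{pAz/k}-z$. That is not the maximiser of $W=Bv(\bar{s})$ from part (i), and $v$ never enters the argument.
\item You instead evaluate the stated $W$ at $B^*$ using $\bar{s}(B^*)=k/p$. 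When $A>z$, $\log\bar{s}(B^{**})=B^{**}/(B^{**}+z)\in(0,1)$, so your criterion is equivalent to $k/p<\bar{s}(B^{**})\in(1,e)$.
\item This holds for $k\le p$ and as $k\to 0$, but it fails for $k\ge ep$. For example, $A=10^4$, $z=100$, $p=1$, $k=10$ gives $B^*=900$ and $W'(B^*)=\log 10-0.9>0$, which is too little entry.
\end{itemize}
Your route stays faithful to the proposition's own welfare criterion. It correctly shows that ``generically'' can only mean ``for small $k/p$''. The paper's route gets an unconditional conclusion only by switching to a different objective.
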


\begin{proof}
(i)~Take $v(s) = \log s$. Then
\[
    W(B) = B\,\log\!\left(\frac{A}{B+z}\right)
         = B\,\bigl[\log A - \log(B+z)\bigr].
\]
Differentiating:
\[
    W'(B) = \log A - \log(B+z) - \frac{B}{B+z}.
\]
As $B \to 0^+$, $W'(B) \to \log A - \log z > 0$
(assuming $A > z$, which holds whenever the market is viable).
As $B \to \infty$, $W'(B) \to -\infty$. Since $W'$ is
continuous and changes sign, there exists a unique $B^{**}$
satisfying $W'(B^{**}) = 0$, and $W$ is decreasing for
$B > B^{**}$.

\medskip\noindent
(ii)~At $B^*$, each builder earns zero profit: $p\bar{s}(B^*) = k$.
The social planner's optimum $B^{**}$ internalises the negative
externality that each entrant imposes on incumbents by diluting
their attention. Since individual entrants do not internalise
this externality, they enter whenever $\pi > 0$, leading to
$B^* > B^{**}$. Formally, the private marginal benefit of entry
is $p\bar{s}(B) - k$, while the social marginal benefit is
$p\bar{s}(B) - k + B\, p\, \bar{s}'(B)$, which includes the
negative term $B\, p\, \bar{s}'(B) < 0$ (the business-stealing
externality). The private incentive exceeds the social incentive,
so entry proceeds beyond the social optimum.
\end{proof}

\begin{proposition}[Builder Saturation Effect --- formal statement]
\label{prop:saturation_law}
In the model of Sections~\ref{sec:model}--\ref{sec:heterogeneity},
the following hold jointly:
\begin{enumerate}[label=(\roman*)]
    \item \textbf{Dilution:} $\bar{s}(B)$ is strictly decreasing
          in $B$, and $\bar{s}(B) \to 0$ as $B \to \infty$
          (Propositions~\ref{prop:dilution}--\ref{prop:vanishing}).
    \item \textbf{Zero equilibrium profit:} Under free entry,
          $\pi(B^*) = 0$ and $\bar{s}(B^*) = k/p$
          (Propositions~\ref{prop:equilibrium}--\ref{prop:zero_profit}).
    \item \textbf{Entry expansion under cost reduction:}
          $\partial B^*/\partial k < 0$, so reducing entry costs
          increases the number of builders
          (Proposition~\ref{prop:comp_statics}).
    \item \textbf{Demand-side invariance:} Increases in $A$ are
          fully absorbed by entry; $\bar{s}(B^*)$ and $\pi(B^*)$
          are unaffected
          (Corollary~\ref{cor:demand_invariance}).
    \item \textbf{Concentration} (for $\alpha \in [0,1)$)\textbf{:}
          Under heterogeneous quality and reinforcement ($\alpha > 0$),
          the outcome distribution is strictly more concentrated
          than under quality heterogeneity alone, as measured by
          the Gini coefficient
          (Proposition~\ref{prop:monotone_alpha}).
    \item \textbf{Mean--median divergence} (for $\alpha \in [0,1)$)\textbf{:}
          The ratio of median to mean attention is strictly
          decreasing in $\alpha$
          (Proposition~\ref{prop:mean_median}), and collapses
          toward zero as $\alpha \to 1^-$.
\end{enumerate}

\noindent Taken together, (i)--(iv) are proven results of the
symmetric free-entry model. Results (v)--(vi) hold at interior
fixed points for $\alpha < 1$; behaviour at the $\alpha = 1$
boundary is consistent with extreme concentration but is
characterised only indirectly via imported results and
numerical simulation. Collectively, these results formalise the
coexistence of mass participation and winner-take-most outcomes.
\end{proposition}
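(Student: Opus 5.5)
The plan is to prove Proposition~\ref{prop:saturation_law} by assembling it from the earlier results, one item at a time. For each of (i)--(vi) I will check that the hypotheses of the cited result hold in the stated regime and that its conclusion matches the claimed item. No new analysis is needed for (i)--(iv). For (i), I invoke Proposition~\ref{prop:dilution} for strict monotonicity of $\bar{s}(B)=A/(B+z)$ and Proposition~\ref{prop:vanishing} for the limit. For (ii), I use Proposition~\ref{prop:equilibrium}, which gives $B^*$, and Proposition~\ref{prop:zero_profit}, which gives $\pi(B^*)=0$ and $\bar{s}(B^*)=k/p$. I restrict (ii) to the interior regime $k<pA/z$. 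In the corner $B^*=0$ the zero-profit identity need not hold with equality, and the statement should be read accordingly. For (iii), I cite item (iii) of Proposition~\ref{prop:comp_statics}, again at the interior equilibrium. For (iv), I cite Corollary~\ref{cor:demand_invariance}.

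For (v) and (vi), I work at interior fixed points with $\alpha\in[0,1)$. The case $\alpha<1$ of Proposition~\ref{prop:fp_char} gives $x_i^*=(Ae^{\beta q_i}/Z^*)^{1/(1-\alpha)}$. Since $Z^*$ is common to all $i$, we get $x_i^*\propto e^{\beta q_i/(1-\alpha)}$. So the cross-sectional shape of attention is that of a static logit with effective sensitivity $\beta/(1-\alpha)$. The baseline ``quality heterogeneity alone'' is $\alpha=0$. Then (v) is Proposition~\ref{prop:monotone_alpha} applied with $\alpha_1=0<\alpha_2=\alpha$, and (vi) is Proposition~\ref{prop:mean_median}. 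The limit statement in (vi) is the explicit ratio $\exp(-\beta^2\sigma^2/(2(1-\alpha)^2))$, which tends to $0$ as $\alpha\to1^-$.

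The main obstacle is making (v) and (vi) rigorous beyond the log-normal approximation. The earlier proofs rely on the large-$B$, normal-quality, log-normal Gini and median formulas. Proposition~\ref{prop:fp_char} also asserts existence of interior fixed points for $\alpha<1$ but does not really show it; this is the existence of $Z^*$ solving the self-consistency equation. I would first close that gap with a one-dimensional argument. Define $G(Z)=\sum_{j}(A e^{\beta q_j}/Z)^{\alpha/(1-\alpha)}e^{\beta q_j}$ and seek $Z$ with $G(Z)=Z$. The left side is continuous and strictly decreasing in $Z>0$, going from $+\infty$ at $0^+$ to $0$ as $Z\to\infty$ (assuming $\alpha>0$; the case $\alpha=0$ is the static logit). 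The right side is strictly increasing, so there is a unique crossing and hence a unique interior fixed point.

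For a distribution-free version of (v), I would use a majorization argument. The shares $x_i^*/\sum_j x_j^*$ come from exponentiating $\gamma q_i$ and normalizing, with $\gamma=\beta/(1-\alpha)$. Increasing $\gamma$ makes the share vector more unequal in the Lorenz order: the ratio of any higher-quality share to a lower-quality one is strictly increasing in $\gamma$. Since the Gini coefficient respects the Lorenz order, it is strictly increasing in $\gamma$, hence in $\alpha$, whenever the qualities are not all equal. This removes the dependence on normality and on large $B$. I would keep the log-normal formulas only as the explicit illustration that makes (vi) quantitative.
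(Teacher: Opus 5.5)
Your assembly of (i)--(vi) from the cited results is exactly the paper's proof. That proof consists of the single remark that each item follows from the referenced propositions.

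What you add is correct and goes beyond what the paper provides. The interior restriction $k<pA/z$ is genuinely needed for (ii)--(iv), since at the corner $\bar{s}(0)=A/z<k/p$ whenever $k>pA/z$. Your equation $G(Z)=Z$ gives existence and uniqueness of the interior fixed point, which Proposition~\ref{prop:fp_char} only asserts. The resulting $x_i^*$ automatically satisfy $\sum_{i=0}^{B}x_i^*=A$.

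Most importantly, your Lorenz-order argument proves Proposition~\ref{prop:monotone_alpha} for every finite $B$ and every non-constant quality vector. The paper instead uses the log-normal Gini formula $2\Phi(\sigma_{\mathrm{eff}}/\sqrt{2})-1$. That formula is only an asymptotic approximation under normal qualities, although the proposition is stated for arbitrary heterogeneous qualities. To make your step airtight:
\begin{itemize}
\item Cite the standard lemma: if $b_i/a_i$ is nondecreasing along the common ordering of $a$ and $b$, then $b/\sum_i b_i$ majorizes $a/\sum_i a_i$.
\item Get strictness from the Lorenz ordinate at $k=\#\{i:q_i=\min_j q_j\}<B$.
\item Use $\mathrm{Gini}=(B+1-2\sum_{k=1}^{B}L_k)/B$, which is strictly decreasing in the ordinates $L_k$.
\end{itemize}

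Your plan needs adjusting at (vi). It cannot be freed from the log-normal regime the way (v) can, because the median/mean ratio is not monotone in the Lorenz order. In fact (vi) is false for general finite quality vectors.

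Take $\gamma=\beta/(1-\alpha)$ and $B$ odd. The common factor $(A/Z^*)^{1/(1-\alpha)}$ cancels, and the builders' ratio is $R(\gamma)=Be^{\gamma q_{\mathrm{med}}}/\sum_{j=1}^{B}e^{\gamma q_j}$. Hence $\frac{d}{d\gamma}\log R=q_{\mathrm{med}}-m(\gamma)$, where $m(\gamma)=\sum_j q_je^{\gamma q_j}/\sum_j e^{\gamma q_j}$ is the exponentially tilted mean quality. Its derivative is the tilted variance, so $m$ is strictly increasing. It follows that:
\begin{itemize}
\item $R$ is strictly decreasing in $\alpha$ on $[0,1)$ if and only if $m(\beta)\ge q_{\mathrm{med}}$;
\item $R\to0$ as $\alpha\to1^-$ if and only if $q_{\mathrm{med}}<\max_j q_j$.
\end{itemize}

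For instance, take $\beta=1$, $B=3$, $q=(-1,0,0.01)$. Then $R\approx1.26$ at $\alpha=0$ and $R\approx1.39$ at $\alpha=1/2$, even though the Gini coefficient rises, as your argument predicts. Under normal qualities with large $B$, one has $m(\beta)\approx\mu+\beta\sigma^2>\mu\approx q_{\mathrm{med}}$. That is why Proposition~\ref{prop:mean_median} holds in that regime.

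So the log-normal calculation is not an illustration of a distribution-free fact; it is the hypothesis behind (vi). State (vi) under that regime, or under the explicit condition $m(\beta)\ge q_{\mathrm{med}}$.
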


\begin{proof}
Each component is established by the referenced propositions.
The joint statement collects them to define the Builder
Saturation Effect as a composite regularity.
\end{proof}

\subsection{Saturation under Endogenous Attention Growth}

\begin{proposition}[Persistence of saturation under attention augmentation]
\label{prop:endogenous}
Suppose aggregate attention grows over time as
$A(t) = A_0\, g(t)$ with $g(t)$ increasing, and entry costs
decline as $k(t)$ with $k(t) \to 0$. The equilibrium attention
per builder at time $t$ is:
\[
    \bar{s}(B^*(t)) = \frac{k(t)}{p}.
\]
This converges to zero whenever $k(t) \to 0$, regardless of
the growth rate of $g(t)$.
\end{proposition}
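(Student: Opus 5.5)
The plan is to reduce the dynamic statement to the static zero-profit identity applied pointwise in time. At each date $t$ the environment is the symmetric free-entry model of Section~\ref{sec:payoffs} with parameters $(A(t), k(t), p, z)$. Entry and exit are assumed to adjust within each period, so the market is in its free-entry equilibrium at every $t$. Under that reading, Proposition~\ref{prop:equilibrium} applied with $A = A(t)$ and $k = k(t)$ gives
\[
B^*(t) = \max\!\left\{\frac{p A_0 g(t)}{k(t)} - z,\; 0\right\}.
\]

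The first step is to check that the interior case applies for all sufficiently large $t$. The viability condition $k(t) < pA(t)/z$ must hold eventually. This follows because $k(t) \to 0$ while $A(t) = A_0 g(t) \geq A_0 g(t_0) > 0$ for $t \ge t_0$, since $g$ is increasing and $A$ is positive. Hence there is a finite $t_1$ beyond which $B^*(t) > 0$ is interior.

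For $t \ge t_1$, I will invoke Proposition~\ref{prop:zero_profit} at each such $t$. The zero-profit condition $p\,\bar s(B^*(t)) = k(t)$ yields $\bar s(B^*(t)) = k(t)/p$, and neither $A(t)$ nor $g(t)$ appears in this expression. Since $p>0$ is fixed, $k(t)\to 0$ immediately gives $\bar s(B^*(t)) \to 0$. This holds for any increasing $g$, bounded or unbounded and however fast it grows, which is the ``regardless of growth rate'' clause. As a consistency check, one can verify directly that $A(t)/(B^*(t)+z) = A(t)\,k(t)/(pA(t)) = k(t)/p$.

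The main obstacle is interpretive rather than computational. The argument needs the assumption that entry adjusts instantaneously, so that the economy sits on the static free-entry locus at each $t$. It also needs a treatment of the boundary regime $B^*(t)=0$ at early times, where the identity fails and instead $\bar s$ is undefined or no builders are present. I would state the proposition explicitly for $t \ge t_1$ and note that the limit statement concerns only large $t$. This means the early boundary regime does not affect the conclusion. Integer constraints on $B^*$ are ignored, as elsewhere in the paper.
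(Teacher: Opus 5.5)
Your proposal is correct and follows essentially the same route as the paper. Like the paper, you apply the static free-entry equilibrium pointwise in $t$, invoke the zero-profit identity $\bar{s}(B^*(t)) = k(t)/p$ from Proposition~\ref{prop:zero_profit}, and let $k(t) \to 0$, noting that $A(t)$ never enters. Your extra check that the interior regime $k(t) < pA(t)/z$ holds for all sufficiently large $t$ (because $g$ is increasing, so $A(t)$ is bounded away from zero) is a real tightening: the paper simply writes $B^*(t) = pA_0 g(t)/k(t) - z$ without the $\max\{\cdot, 0\}$, and so implicitly assumes interiority at every $t$, including early dates where the identity can fail.
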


\begin{proof}
At each $t$, free entry yields $B^*(t) = pA_0 g(t)/k(t) - z$.
By the zero-profit condition (Proposition~\ref{prop:zero_profit}),
$\bar{s}(B^*(t)) = k(t)/p$. Since this expression depends only
on $k(t)$ and $p$, and not on $A(t)$, it converges to zero as
$k(t) \to 0$ irrespective of $g(t)$.
\end{proof}

\begin{corollary}[Condition for avoiding saturation]
\label{cor:avoid_saturation}
Equilibrium attention per builder is non-decreasing over time if
and only if $\dot{k}(t) \geq 0$, i.e.\ entry costs do not decline.
Since the premise of AI-assisted building is that $k$ is falling
rapidly, this condition is generically violated.
\end{corollary}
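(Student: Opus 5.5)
The plan is to reduce the statement to the zero-profit identity of Proposition~\ref{prop:endogenous}. There, for each $t$ on an interval where the interior equilibrium obtains (that is, $k(t) < pA(t)/z$), we have $\bar{s}(B^*(t)) = k(t)/p$.

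The first step is to fix this setting. I take $k(\cdot)$ to be differentiable in $t$. I treat the equilibrium as interior throughout, which is automatic once $k$ is small relative to $pA_0 g(t)/z$, since $g$ is increasing. Under this assumption the map $t \mapsto \bar{s}(B^*(t))$ is just $t \mapsto k(t)/p$ with $p>0$ a constant. The whole question then becomes one about the monotonicity of $k(t)$; the growth path $g(t)$ and the outside option $z$ drop out entirely. This is exactly the invariance already recorded in Corollary~\ref{cor:demand_invariance}, now applied pointwise in time.

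The second step is to differentiate. This gives $\frac{d}{dt}\bar{s}(B^*(t)) = \dot{k}(t)/p$, matching Equation~\eqref{eq:attention_dynamics}. Since $p>0$, the sign of this derivative equals the sign of $\dot{k}(t)$. For the ``if'' direction: $\dot{k}(t) \ge 0$ everywhere gives a non-negative derivative, so $\bar{s}(B^*(t))$ is non-decreasing by the mean value theorem. For the ``only if'' direction: if $\bar{s}(B^*(t))$ is non-decreasing, then so is $k(t) = p\,\bar{s}(B^*(t))$, hence $\dot{k}(t)\ge 0$ wherever the derivative exists. If differentiability is not assumed, the same argument works directly with monotonicity, since multiplication by the positive constant $1/p$ preserves order. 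The final sentence of the corollary then follows from the premise $\dot{k}<0$, which forces $\bar{s}(B^*(t))$ to be strictly decreasing, in line with Proposition~\ref{prop:endogenous}.

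The main obstacle is the boundary regime rather than the calculus. If at some times $k(t) \ge pA(t)/z$, then $B^*(t)=0$ and $\bar{s}$ is not defined by the interior formula, so the identity $\bar{s}=k/p$ fails there. I would handle this by stating explicitly that the corollary is understood on the interior region, and by noting that with $k$ falling and $A$ rising, the interior condition, once satisfied, persists. A secondary subtlety is the integer nature of $B^*$, which I would set aside by treating $B^*$ as continuous, as the paper does throughout.
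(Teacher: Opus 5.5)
Your proposal is correct and matches the paper's proof: differentiate the identity $\bar{s}(B^*(t)) = k(t)/p$ from Proposition~\ref{prop:endogenous}, and note that since $p>0$ the sign of $\dot{k}(t)/p$ is the sign of $\dot{k}(t)$. Your explicit restriction to the interior regime $k(t) < pA(t)/z$ is a worthwhile addition that the paper leaves implicit, because at a corner $B^*(t)=0$ the identity $\bar{s}(B^*)=k/p$ does not hold and the argument does not cover that case.
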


\begin{proof}
$d\bar{s}(B^*(t))/dt = \dot{k}(t)/p$. This is non-negative if and
only if $\dot{k}(t) \geq 0$.
\end{proof}

\subsection{Outside Option and Inertia}

\begin{proposition}[Role of the outside option]
\label{prop:outside}
The outside option $z$ acts as a demand-side friction that reduces
equilibrium entry. Specifically:
\begin{enumerate}[label=(\roman*)]
    \item For any $B$, the fraction of total attention captured by
          all builders collectively is $B/(B+z)$, which is strictly
          increasing in $B$ but bounded above by $1$.
    \item In equilibrium, the fraction of attention absorbed by
          the outside option is $z/(B^*+z) = kz/(pA)$.
    \item As $z \to \infty$ (extreme inertia),
          $B^* = \max\{pA/k - z, 0\}$ eventually reaches zero:
          sufficiently strong inertia prevents all entry.
\end{enumerate}
\end{proposition}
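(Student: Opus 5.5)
The three parts of Proposition~\ref{prop:outside} are all direct computations on the symmetric allocation (Equation~\eqref{eq:symmetric_attention}) and the equilibrium formula of Proposition~\ref{prop:equilibrium}. I will take them in order.

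For (i), I sum the symmetric attention $s_i = A/(B+z)$ over the $B$ builders. This gives collective builder attention $BA/(B+z)$, so the builders' fraction of $A$ is $B/(B+z)$. Differentiating in $B$ gives $z/(B+z)^2 > 0$, since $z = e^{\beta(q_0-q)} > 0$. Boundedness follows from $B/(B+z) = 1 - z/(B+z) < 1$ for every finite $B$, and the fraction tends to $1$ as $B\to\infty$. By the constraint $\sum_i s_i + s_0 = A$, the complementary fraction $z/(B+z)$ is the share held by the outside option, which feeds directly into (ii).

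For (ii), I evaluate $z/(B+z)$ at the interior equilibrium. The zero-profit condition~\eqref{eq:zero_profit_condition} gives $B^* + z = pA/k$ without solving for $B^*$ separately. Substituting, the outside-option fraction is $z/(pA/k) = kz/(pA)$. This requires the interior case $k < pA/z$ of Proposition~\ref{prop:equilibrium}, and I will state that condition explicitly. Under it, $kz/(pA) < 1$, which confirms the expression is a valid share. At the corner $B^*=0$ the outside option absorbs everything, with fraction $1$, and the formula does not apply there.

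For (iii), I hold $p, A, k$ fixed. Then $pA/k - z$ is strictly decreasing in $z$, with slope $-1$ as in Proposition~\ref{prop:comp_statics}(iv). It becomes nonpositive once $z \geq pA/k$, so the max-formula yields $B^* = 0$ for all such $z$. This is exactly the failure of the entry viability condition~\eqref{eq:entry_viability}. I expect no real obstacle in this proof. The only point needing care is (ii): the identity holds only for the interior solution, so I must restrict to $k < pA/z$ to avoid an inconsistent statement at the boundary.
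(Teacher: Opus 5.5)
Your proposal is correct and follows essentially the same direct computations as the paper's proof: summing the symmetric shares for (i), substituting $B^*+z = pA/k$ for (ii), and reading off $z \geq pA/k$ for (iii). Restricting (ii) explicitly to the interior case $k < pA/z$ is a worthwhile precision that the paper leaves implicit. At the corner $B^*=0$ the outside-option share is $1$, while $kz/(pA)$ exceeds $1$ whenever $k > pA/z$.
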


\begin{proof}
(i)~Under symmetry, total builder attention is
$B\bar{s}(B) = BA/(B+z)$, so the builder share is $B/(B+z)$.
This is increasing in $B$ (derivative $z/(B+z)^2 > 0$) and
approaches $1$ as $B \to \infty$.

\noindent
(ii)~At equilibrium $B^* = pA/k - z$, the outside-option share is
$z/(B^*+z) = z/(pA/k) = kz/(pA)$.

\noindent
(iii)~$B^* = pA/k - z$ becomes non-positive when $z \geq pA/k$.
\end{proof}

\subsection{Summary of Formal Results}

The following table provides a reference guide to the
propositions and their roles in supporting the main argument.

\begin{table}[h]
\centering
\small
\caption{Summary of propositions.}
\label{tab:prop_summary}
\begin{tabular}{lll}
\toprule
\textbf{Prop.} & \textbf{Result} & \textbf{Role in argument} \\
\midrule
\ref{prop:dilution}
    & $\bar{s}(B)$ strictly decreasing, convex
    & Core dilution mechanism \\
\ref{prop:vanishing}
    & $\bar{s}(B) \to 0$
    & Limit of dilution \\
\ref{prop:elasticity}
    & Elasticity $\to -1$
    & Quantifies dilution rate \\
\ref{prop:equilibrium}
    & $B^* = pA/k - z$
    & Equilibrium entry \\
\ref{prop:zero_profit}
    & $\bar{s}(B^*) = k/p$
    & Zero-profit identity \\
\ref{prop:comp_statics}
    & Signs of $\partial B^*/\partial(\cdot)$
    & Policy-relevant statics \\
\ref{prop:relative}
    & $s_i/s_j = e^{\beta(q_i - q_j)}$
    & Relative competition \\
\ref{prop:lognormal}
    & Log-normal under normal quality
    & Baseline inequality \\
\ref{prop:fixed_point}
    & Fixed-point characterisation
    & Equilibrium of dynamics \\
\ref{prop:fp_char}
    & No interior fixed point at $\alpha = 1$
    & Necessary for concentration \\
\ref{prop:monotone_alpha}
    & Gini increasing in $\alpha$ ($\alpha < 1$)
    & Monotone concentration \\
\ref{prop:mean_median}
    & Median/mean decreasing in $\alpha$
    & Mean--median divergence \\
\ref{prop:welfare}
    & Excess entry
    & Welfare implication \\
\ref{prop:saturation_law}
    & Builder Saturation Effect
    & Central result \\
\ref{prop:endogenous}
    & Saturation persists under growth
    & Robustness \\
\ref{prop:outside}
    & Role of inertia
    & Demand-side friction \\
\bottomrule
\end{tabular}
\end{table}


\bibliographystyle{unsrt}

\end{document}